\title{Automated Catamorphism Synthesis for Solving Constrained Horn Clauses over Algebraic Data Types}
\titlerunning{Catamorphism Synthesis for Program Invariants over ADT}
\author{Hiroyuki Katsura\inst{1}\orcidlink{0000-0003-3420-4207} \and Naoki Kobayashi\inst{1}\orcidlink{0000-0002-0537-0604} \and
Ken Sakayori\inst{1}\orcidlink{0000-0003-3238-9279} \and
Ryosuke Sato\inst{2}\orcidlink{0000-0001-8679-2747}}
\institute{The University of Tokyo, Tokyo, Japan \\
\email{\{h.katsura,koba,sakayori\}@is.s.u-tokyo.ac.jp}
\and
Tokyo University of Agriculture and Technology, Tokyo, Japan \\
\email{rsato@acm.org}
}
\authorrunning{Hiroyuki Katsura \and Naoki Kobayashi \and Ken Sakayori, and Ryosuke Sato}
\newcommand \hoice{{\sc HoIce}}
\newcommand \eldarica{{\sc Eldarica}}
\newcommand \spacer{{\sc Spacer}}
\newcommand \catalia {{\sc Catalia}}
\newcommand \ringen{{\sc RInGen}}
\newcommand \Int {\mathbb{Z}}
\newcommand \Def {\stackrel{\Delta}{=}}
\newcommand \BNFDef {::=}
\DeclareMathOperator{\dom}{dom}
\DeclareMathOperator{\fv}{FV}
\newcommand\sem[1]{\llbracket #1 \rrbracket}
\newcommand\seq[1]{\widetilde{#1}}
\newcommand{\slambda}{%
  \mathop{%
    \rlap{$\lambda$}%
    \mkern2mu
    \raisebox{.275ex}{$\lambda$}%
  }%
}
\newcommand\sort {\xi}
\newcommand\dsort {\delta}
\newcommand \TADTZ {T_{\text{ADT+Z}}}
\newcommand \TZ {T_{\text{Z}}}
\newcommand \PREDSYM {P}
\newcommand \CONS {C}
\newcommand \sym[1] {\mathit{#1}}
\newcommand \universe[1] {#1}
\newcommand \herbrand {\mathcal{H}}
\newcommand \dterm {t}
\newcommand \cons {\sym{cons}}
\newcommand \nil {\sym{nil}}
\newcommand \ilist {\sym{ilist}}
\newcommand \nat {\sym{nat}}
\newcommand \combine {\mathcal{F}}
\newcommand \NAPPROX {N}
\newcommand \cata {\mathcal{C}}
\newcommand \cataT {\overline{\mathcal{C}}}
\newcommand {\banana}[1] {\cata^{#1}}
\newcommand {\bananaT}[2] {\cataT^{#1}_{#2}}
\newcommand \cataeo {\cata_{eo}}
\newcommand \defaultCATA{\cata_0}
\newcommand \cataA {\alpha_{\cata}}
\newcommand \cataAalt[1] {\alpha_{#1}}
\newcommand \CENV {\Xi}
\newcommand \cataAT {{\alpha}_{\cata, \CENV}}
\newcommand \cataAFB {\alpha_{\cata,\CENV}}
\newcommand \CODP {\AVAR{P}_\delta}
\newcommand \CODPo {P_\delta}
\newcommand \cataTMP {\mathcal{T}}
\newcommand \linearTMP {\mathcal{L}}
\newcommand \templates {\mathfrak{T}}
\newcommand \encoderalt[2] {\langle#1\rangle_{#2}}
\newcommand \AVAR[1] {\overline{#1}}
\newcommand \cache {\Theta}
\newcommand \form{\varphi}
\newcommand \true {\textbf{tt}}
\newcommand \false {\textbf{ff}}
\newcommand \FALSE {\false}
\newcommand\HFalse{\FALSE}
\newcommand \predicate {\textbf{p}}
\newcommand \arith {\textbf{a}}
\newcommand \operator {\mathbin{\textbf{op}}}
\newcommand \term {M}
\newcommand \constraint {\theta}
\newcommand \clause {\mathbb{C}}
\newcommand \system {\mathbb{S}}
\newcommand \head {H}
\newcommand* \stypeint {\sym{int}}
\newcommand* \Prop{\stypebool}
\newcommand* \stypebool {\mathit{bool}}
\newcommand* \wt[3][]{\ifthenelse{\isempty{#1}}{\senv}{\senv, #1} \vdash_H #2: #3}
\newcommand {\model} {\mathcal{M}}
\newcommand {\val}[1]{ \llbracket #1 \rrbracket }
\newcommand {\vwt}[3][]{\val{\wt{#2}{#3}}(\alpha\ifthenelse{\isempty{#1}}{}{[#1]})}
\tikzstyle{item} = [ellipse, rounded corners, minimum width=3cm, minimum height=1cm,text centered, draw=black]
\tikzstyle{process} = [rectangle, minimum width=2cm, minimum height=1cm, text centered, draw=black, fill=blue!30]
\tikzstyle{decision} = [rectangle, minimum width=2cm, minimum height=1cm, text centered, draw=black, fill=green!30]
\tikzstyle{arrow} = [thick,->,>=stealth]
\tikzstyle{result} = [diamond, minimum width=3cm, minimum height=1cm, text centered, draw=black, fill=green!30]
\lstdefinestyle{caml}{
 language=caml,
 basicstyle={\footnotesize\ttfamily},
 identifierstyle={\small},
commentstyle={\small\ttfamily \color[rgb]{0,0,0.5}},
keywordstyle={\small\bfseries \color[rgb]{0.8,0,0}},
keywordstyle = [2]{\color{orange}},
ndkeywordstyle={\small},
stringstyle={\small\ttfamily \color[rgb]{0,0,1}},
frame={tb},
breaklines=true,
columns=[l]{fullflexible},
numberstyle={\scriptsize},
stepnumber=1,
morecomment=[l]{//},
morekeywords={assert}
}
\newif\iffull\fulltrue
\newif\iflong\longtrue
\newif\ifdraft\draftfalse
  \newcommand{\katsura}[1]{\textcolor{blue}{[#1 -katsura]}}
  \newcommand{\nk}[1]{\textcolor{red}{[#1 -nk]}}
  \newcommand*{\ks}[1]{\textcolor{teal}{\scriptsize [#1 -ks]}}
  \newcommand{\todo}[1]{\textcolor{blue}{[TODO: #1]}}
  \newcommand{\katsura}[1]{}
  \newcommand{\nk}[1]{}
  \newcommand*{\ks}[1]{}
  \newcommand{\todo}[1]{}
\begin{document}
\maketitle

\begin{abstract}
  We propose a novel approach to satisfiability checking of Constrained Horn Clauses (CHCs) %
  over Algebraic Data Types (ADTs).
  CHC-based automated verification has gained considerable attention in recent years, leading to the development of various %
  CHC solvers.
  However, existing solvers for CHCs over ADTs are not fully satisfactory, due to their limited ability to find and express
   models
   involving
   inductively defined functions/predicates (e.g., those about the sum of list elements).
To address this limitation, we consider \emph{catamorphisms} (generalized fold functions), and present a framework for automatically discovering appropriate catamorphisms on demand and using them to express a model of given CHCs.
We have implemented a new CHC solver called Catalia based on the proposed method.
Our experimental results for the CHC-COMP 2024 benchmark
show that Catalia outperforms state-of-the-art solvers in solving satisfiable CHCs over ADTs. Catalia was also used as a core part of the tool called ChocoCatalia, which won
the ADT-LIA category of CHC-COMP 2025.

\keywords{Constrained Horn Clauses \and Algebraic Data Types \and Catamorphisms \and Automated Verification.
}
\end{abstract}

\section{Introduction}
\label{sec:intro}

Fully automated verification of programs through satisfiability checking of constrained Horn clauses (CHCs) has attracted considerable attention in recent years, as it offers a uniform and language-agnostic framework for verifying diverse program properties~\cite{seahorn,jayhorn,rusthorn,AltBHS22}.
For example, consider the following functional program.
\begin{verbatim}
let rec plus m n = if n=0 then m else (plus m (n-1))+1
let main m n = if n>=0 then assert(plus m n >= m)
\end{verbatim}
\newcommand\plus{\mathit{Plus}}
The lack of assertion failures in the above program can be reduced to the satisfiability of the following CHCs,
i.e, the problem of whether there exists a predicate \(\plus\) that satisfies them:
\begin{align*}
&  \forall m.\; \plus(m,0,m).\\
&  \forall m, n, r.\;\plus(m,n,r+1) \Leftarrow \plus(m, n-1, r).\\
  &  \forall m,n,r.\; \HFalse\Leftarrow \plus(m, n, r) \land n\ge 0 \land r<m.
\end{align*}
Here, \(\HFalse\) represents false.
The predicate \(\plus\) may be considered an invariant among arguments \(m, n\) and the corresponding return value \(r\);
indeed, \(\plus(m,n,r)\equiv m+n=r\) satisfies the above CHCs.
State-of-the-art CHC solvers~\cite{spacer,eldarica,gpdr,ChampionCKS20,BlichaFHS22} can quickly solve problems like
the above, finding an appropriate invariant (\(\plus(m,n,r)\equiv m+n=r\) or \(\plus(m,n,r)\equiv %
r\ge m\) in this case),
enabling fully automated program verification.

Despite various efforts, however, the current CHC solvers are not very good at dealing with data structures.
For example, consider the following variant of the example above, where natural numbers are represented as data structures.
\newcommand\lt{\mathit{Lt}}
\newcommand\plusnat{\mathit{PlusNat}}
\begin{align*}
&  \forall m.\; \plusnat(m,Z,m).\\
  &  \forall m, n, r.\;\plusnat(m,S(n),S(r)) \Leftarrow \plusnat(m, n, r).\\
  & \forall n.\;\lt(Z, S(n)).\qquad \forall m,n.\;\lt(S(m), S(n)) \Leftarrow \lt(m,n).\\
&  \forall m,n,r.\;\HFalse \Leftarrow \plusnat(m, n, r)\land \lt(r, m).
\end{align*}
Z3 Spacer~\cite{spacer}, a state-of-the-art CHC solver, fails to prove the satisfiability of the CHCs above.
A problem is that while a model for \(\plus\) can be expressed by a simple linear arithmetic formula (\(m+n=r\)),
an inductively defined predicate is required to express a model for \(\plusnat\), and it is, in general, hard to automatically
find such an inductively defined predicate and check that it is indeed a model (i.e., satisfies all the clauses).

To address the issue above, we propose a method for abstracting CHCs by using a \emph{catamorphism}~\cite{MeijerFP91} from
data structures to (tuples of) integers, on which existing CHC solvers perform well in practice.
For the example above, we can abstract \(Z\) and \(S\) respectively to
\(0\) and \(\lambda x.x+1\)\footnote{Actually, we do not lose any information using this \emph{abstraction}, as the induced catamorphism is injective.
In general, however a catamorphism may not be injective, hence introducing abstraction.},
and obtain the following ``abstract'' CHCs over integers, whose satisfiability implies that of the original CHCs.
\begin{align*}
&  \forall m.\; \AVAR{\plusnat}(m,0,m).\\
  &  \forall m, n, r.\;\AVAR{\plusnat}(m,n+1,r+1) \Leftarrow \AVAR{\plusnat}(m, n, r).\\
  & \forall n.\;\AVAR{\lt}(0, n+1).\qquad \forall m,n.\;\AVAR{\lt}(m+1, n+1) \Leftarrow \AVAR{\lt}(m,n).\\
&  \forall m,n,r.\;\HFalse \Leftarrow \AVAR{\plusnat}(m, n, r)\land \AVAR{\lt}(r, m).
\end{align*}
Universal quantifiers are now over integers, and for the sake of simplicity, we have omitted some conditions on the variables; %
see later sections.
State-of-the-art solvers can instantly deduce its satisfiability, which also implies the satisfiability of the original CHCs.

The remaining question is how to automatically find appropriate catamorphisms. In the above example, the required catamorphism
is just the ``size'' function incorporated by default in some CHC solvers like Eldarica~\cite{eldarica}, but as we will see later,
the size function is not always sufficient.
To this end, we propose a method for automatically finding appropriate catamorphisms in a counterexample-guided manner.
We have implemented the proposed method and developed a new CHC solver called \catalia{}.
According to our experiments using the benchmark set of the ADT category of CHC-COMP 2024,
\catalia{} significantly outperformed state-of-the-art CHC solvers for SAT instances.

Our contributions are summarized as follows.
\begin{itemize}
\item Formalization of abstraction of CHCs using catamorphisms.
\item Counterexample-guided automatic synthesis of appropriate catamorphisms.
\item  Implementation of the proposed method and experimental evaluation.
\end{itemize}
The idea of using catamorphisms for verification of programs with ADTs itself is not new~\cite{SuterDK10,PhamGW16,MukaiKS22,KSG22,AngelisFPP25}. Thus, our main contributions lie in the formalization and implementation of the procedure for
automatically discovering catamorphisms in the context of CHC solving.

The rest of this paper is organized as follows. Section~\ref{sec:pre} reviews CHCs over ADTs and catamorphisms for ADTs.
Section~\ref{sec:overview} gives an overview of our framework.
Section~\ref{sec:catalia-abstraction} formalizes our catamorphism-based abstraction, and Section~\ref{sec:cata-synthesis} introduces our template-based catamorphism synthesis procedure.
Section~\ref{sec:cata-eval} reports the experimental results.
Section~\ref{sec:rel} discusses related work, and Section~\ref{sec:conc} concludes the paper.

\section{Preliminaries}

\label{sec:pre}

\subsection{Constrained Horn Clauses modulo Algebraic Data Types and Integer Arithmetic}

We consider a standard first-order logic and a theory of algebraic data types and integer arithmetic, which is written as \( \TADTZ \).
We also denote the theory of integer arithmetic by \( \TZ \).
For simplicity, we consider formulas that involve only integers and a single algebraic data type (ADT) \( (\dsort, \{\, \CONS_1, \dots, \CONS_k \,\}) \) where \( \dsort \) is a sort and \( C_i \) is a function symbol called a \emph{constructor}.\footnote{Extending our proposed method to support multiple ADTs is straightforward. As detailed in Section~\ref{sec:cata-eval}, our implementation is already capable of handling multiple ADTs within a single instance.
  In a standard theory of algebraic data types, projections and testers are used as the standard connectives for algebraic data types in addition to constructors.
We omit them for simplicity,
since they can be easily removed by standard preprocessing techniques (c.f.\ Section 4.5 of \cite{KostyukovMF21}).
}
Each constructor \( \CONS_i \) is assumed to be a function of \( m_i + n_i \) arguments with its sort specified as
\[
    (\overbrace{\stypeint \times \dots \times \stypeint}^{m_i} \times \overbrace{\dsort \times \dots \times \dsort}^{n_i}) \to \dsort.
\]
Note that \( m_i + n_i \) can be \( 0 \).
We often simply write \( \dsort \) to mention the ADT.

\begin{example}
    An algebraic data type for natural numbers \( \sym{nat} \) in the introduction is defined as \( (\sym{nat}, \{\, Z, S \,\}) \) where \( Z \) is a constant of sort \( \sym{nat}\) and \( S \) has the sort \( \sym{nat} \to \sym{nat}\).
    An algebraic data type for integer lists, written as \( \sym{ilist} \), is defined as \( (\sym{ilist}, \{\, \sym{nil}, \sym{cons} \,\}) \) where \( \sym{nil} \) is a constant of sort \( \sym{ilist}\) and \( \sym{cons} \) has the sort \( \sym \stypeint \times \sym{ilist} \to \sym{ilist}\).
    A list \( [1; 2] \) is written as \( \sym{cons}(1, \sym{cons}(2, \sym{nil})) \).
\end{example}

The sets of \emph{terms} and \emph{constraint formulas} %
are defined by:
\begin{align*}
    (\textit{terms}) \quad \dterm &\BNFDef x \mid n \mid \CONS_i(\seq{\dterm_i}) \mid \dterm_1 \operator \dterm_2 \\
    (\textit{constraint formulas}) \quad \constraint &\BNFDef \true \mid \false \mid \constraint_1 \wedge \constraint_2 \mid \constraint_1 \vee \constraint_2 \mid t_1 \bowtie t_2 \mid \exists x^{\sort}.\: \constraint \mid \forall x^{\sort}.\: \constraint.
\end{align*}
Here, \( x \), \( n \), and \( \sort \) are metavariables for variables, integers, and sorts respectively, \( \bowtie \) ranges over binary predicates in \( \{\, =_{\stypeint}, \neq_{\stypeint}, >, \leq,  =_{\dsort} \,\}  \) and \( \operator \) ranges over binary arithmetic operations \( \{\, +, -, \times \, \} \).
We also write \( \seq{\dterm_i} \) for a sequence of terms \( \dterm_1, \dots, \dterm_k \).
The predicate \( =_{\dsort} \) takes two terms of sort \( \dsort \), while others take two terms of sort \( \stypeint \).
For technical convenience, we omit the disequality \(\neq_{\dsort}\) for ADT;
it can be encoded using CHCs. (c.f. Section 4.4 of \cite{KostyukovMF21}).
We also use \( (\dterm_1, \dots, \dterm_k) =_{\stypeint^k} (\dterm'_1, \dots, \dterm'_k) \) as a syntax sugar for the conjunction of equalities \( \dterm_1 =_{\stypeint} \dterm'_1 \land \dots \land \dterm_k =_{\stypeint} \dterm'_k \).

We write \( \fv(\dterm) \) and \( \fv(\constraint) \) for the set of free variables, and \( \forall \constraint \) and \( \exists \constraint \) for the universal and existential closures of \(\constraint\), respectively.
We also write \(  \universe{\herbrand}_{\dsort} \) for the set of ground terms (i.e., with no free variables) of sort \( \dsort \).
We call non-ground terms \emph{open terms}, and terms without constructors \emph{arithmetic terms}.
A constraint formula \( \constraint \) is \emph{quantifier-free} if no quantifier occurs in \( \constraint \).
We consider only well-sorted terms and formulas, where well-sortedness is defined in the standard manner.
\emph{Substitutions} such as \( [\dterm/x]\dterm' \) and \( [\dterm/x]\constraint \) are defined as usual.
The semantics of terms and formulas are also given in the standard way.

A \emph{constrained Horn clause} (CHC) (over algebraic data types and integer arithmetic) is a formula of the form
\begin{align*}
    \forall \seq{x^{\stypeint}}, \seq{y^{\dsort}}.\: \head \Leftarrow \constraint \land P_1(\seq{t_1}) \land \cdots \land P_k(\seq{t_k}).
\end{align*}
Here, \( P_i(\seq{t_i}) \) is a predicate application, \( \constraint \) is a quantifier-free constraint formula, and  \( \head \) is either \( \HFalse \) or a predicate application \( P(\seq{t_{k+1}}) \).
For simplicity, universal quantifiers are often omitted.
We call a finite set of CHCs \emph{a system of CHCs}.
We use \( \clause \) and \( \system \) as the metavariables for CHCs and systems of CHCs, respectively.
A system of CHCs is (or simply, CHCs are) said to be \emph{satisfiable} if there is an interpretation of predicate variables that makes all the clauses valid.

\newcommand \size {\mathit{size}}
\begin{example}
    \label{ex:plus-nat}
    Recall the CHCs over the ADT \(\sym{nat}\) in Section~\ref{sec:intro}, consisting of the predicates \( \plusnat \) and \( \lt \).
    The CHCs are satisfiable under the %
    model:
    \begin{align*}
        \plusnat(x, y, z) &\Def \size(x) + \size(y) = \size(z)
        \quad %
        \lt(x, y) \Def \size(x) < \size(y),
    \end{align*}
    where the function \(\size\) from \(\sym{nat}\) to integers is defined by \( \size(S^n(Z))=n \).
\end{example}

\subsection{Catamorphisms}

We introduce \emph{catamorphisms} for \( \dsort \), which are generalized fold functions that map instances of \( \universe{\herbrand}_{\dsort} \) to \( \NAPPROX \)-tuples of integers.
Here, \( \NAPPROX \) is called the \emph{approximation degree}. %
A catamorphism \( \cata \) (for \( \dsort \)) is defined as a map constructed as follows:
\begin{align*}
    \cata(x)\Def
    &\textbf{match \(x\) with} \\
    &\qquad
        \textbf{case } \CONS_1(\seq{y_1}, z_1, \dots, z_{n_1}) \Rightarrow
        \combine_1(\seq{y_1}, \cata(z_1), \dots, \cata(z_{n_1})) \\
    &\qquad \qquad \vdots \\
    &\qquad
        \textbf{case } \CONS_k(\seq{y_k}, z_1, \dots, z_{n_k}) \Rightarrow
        \combine_k(\seq{y_k}, \cata(z_1), \dots, \cata(z_{n_k})).
\end{align*}
Here, \( \combine_i \) is called a \emph{structure map} for \( \CONS_i \).
Recall that the constructor \( \CONS_i \) takes \( m_i \) arguments of sort \( \stypeint \) and \( n_i \) arguments of sort \( \dsort \).
The structure map \( \combine_i \) takes \( m_i \) integers %
and \( n_i \) \( \NAPPROX \)-tuples of integers, and returns an \( \NAPPROX \)-tuple of integers.
We also write \( \banana{\combine_1, \dots, \combine_k} \)
for \( \cata \) when clarifying the structure maps.
Furthermore, if a catamorphism has free variables \( a_1, \dots, a_l \) in the definition, we write \( \bananaT{\combine_1, \dots, \combine_k}{\set{a_1, \dots, a_l}} \).

\begin{example}
    \label{ex:cata}
    A catamorphism for \( \sym{ilist} \) is of the form
    \begin{align*}
      \banana{\combine_{\sym{nil}},\combine_{\sym{cons}}}(x)\Def
        &\textbf{match \(x\) with} \\
        &\qquad
            \textbf{case } \sym{nil} \Rightarrow \combine_{\sym{nil}} \\
        &\qquad
            \textbf{case } \sym{cons}(x, l) \Rightarrow \combine_{\sym{cons}}(x, \banana{\combine_{\sym{nil}},\combine_{\sym{cons}}}(l))
    \end{align*}
    where \( \combine_{\sym{nil}} \) and \( \combine_{\sym{cons}} \) are structure maps for the two constructors \( \sym{nil} \) and \( \sym{cons}\), respectively.
    For \(\combine_{\sym{nil}}=0\) and \(\combine_{\sym{cons}}(x, l) = 1+l\),
    \(\banana{\combine_{\sym{nil}},\combine_{\sym{cons}}}\) is the list length function,
    and for \(\combine_{\sym{nil}}=0\) and \(\combine_{\sym{cons}}(x, l) = x+l\),
    \(\banana{\combine_{\sym{nil}},\combine_{\sym{cons}}}\) is the function
    for computing the sum of list elements.
    The catamorphism defined by the structure maps
    \(\combine_{\sym{nil}}=(0,0)\) and \(\combine_{\sym{cons}}(x, (l_1,l_2)) = (1+l_1,x+l_2)\) has
    the approximation degree \(2\); it maps an integer list to a pair consisting
    of  the list length and the sum of elements.

    Similarly, we define a catamorphism \( \cata_{\size} \) for \( \sym{nat} \) as \( \banana{\combine_{Z}, \combine_{S}} \) where \( \combine_{Z} = 0 \) and \( \combine_{S}(x) = 1 + x \). This corresponds to \( \size \) used in Example~\ref{ex:plus-nat}.
\end{example}

\section{Overview}
\label{sec:overview}

This section gives an overview of the proposed procedure, called \catalia{}, which follows a framework of template-based synthesis and counterexample-guided abstraction refinement (CEGAR)~\cite{ClarkeGJLV03}, as illustrated in Figure~\ref{fig:cata-overview}.

Given a system \(\system\) of CHCs over algebraic data types and integer arithmetic, %
we first abstract them to a system \(\system'\) of CHCs over integer arithmetic. %
When \(\system'\) is unsatisfiable, \catalia{} generates a constraint formula \( \constraint \) (called a counterexample),
which witnesses the possible unsatisfiability of \(\system\), based on a resolution proof for unsatisfiability of \(\system'\).
If \( \constraint \) is satisfiable, \(\system\) is indeed unsatisfiable; otherwise, we refine the catamorphism \( \cata \) using \( \constraint \) in the refinement phase.

\begin{figure}[t]
    \begin{tikzpicture}[node distance=2cm]

      \node (chc) [item] {CHC over \( \TADTZ \)};
      \node (cataA) [process, below of=chc,yshift=0.5cm,text width=3cm] {Abstraction \( \cataA \) \\ (Section~\ref{sec:catalia-abstraction})};
      \node (abstracted) [item, below of=cataA,yshift=0.5cm] {CHC over \( \TZ \)};
      \node (isSat) [decision, below of=abstracted,yshift=0.5cm] {Is SAT?};
      \node (sat) [left of=isSat, xshift=-0.4cm] {SAT};
      \node (cex) [item, right of=isSat, xshift=2cm] {CEX \( \constraint \) };
      \node (isfeasible) [decision, right of=cex, yshift=1.2cm,xshift=0.2cm] {Is SAT?};
      \node (unsat) [right of=isfeasible, xshift=0.4cm] {UNSAT};
      \node (synthesis) [process, above of=isfeasible, text width=3cm,xshift=1.8cm,yshift=-0.2cm] {Refinement \\ (Section~\ref{sec:cata-synthesis})};
      \node (cata) [item, left of=synthesis, xshift=-2cm] {Catamorphism \( \cata \)};

      \draw [arrow] (chc) -- (cataA);
      \draw [arrow] (cataA) -- (abstracted);
      \draw [arrow] (abstracted) -- (isSat);
      \draw [arrow] (isSat) -- node[anchor=south] {Yes} (sat);
      \draw [arrow] (isSat) -- node[anchor=south] {No} (cex);
      \draw [arrow] (cex) -- (isfeasible);
      \draw [arrow] (isfeasible) -- node[anchor=south] {Yes} (unsat);
      \draw [arrow] (isfeasible) -- node[anchor=east] {No} (synthesis);
      \draw [arrow] (synthesis) -- (cata);
      \draw [arrow] (cata) -- (cataA);

      \end{tikzpicture}

    \caption{Overview of \catalia{}}
    \label{fig:cata-overview}
  \end{figure}

Below, we briefly explain the abstraction, counterexample generation, and synthesis phases of \catalia{}.
We will provide more details for the abstraction and synthesis phases in Sections~\ref{sec:catalia-abstraction} and \ref{sec:refinement}, respectively.

\subsection{Abstraction}

\label{sec:abstraction-overview}

Recall the following system of CHCs given in Section~\ref{sec:intro}.
\begin{align*}
(i)\ &  \plusnat(m,Z,m).\\
(ii)\ &    \plusnat(m,S(n),S(r)) \Leftarrow \plusnat(m, n, r).\\
(iii)\ & \lt(Z, S(n)).\qquad
(iv)\ \lt(S(m), S(n)) \Leftarrow \lt(m,n).\\
(v)\ &  \false \Leftarrow \plusnat(m, n, r)\land \lt(r, m).
\end{align*}

Using the catamorphism \(\banana{\combine_{Z}, \combine_{S}} \) where \( \combine_{Z} = 0 \) and
\( \combine_{S} (x) =  1 + x \), we obtain the following abstracted version of CHCs.
\begin{align*}
&  \AVAR{\plusnat}(m,0,m)\Leftarrow \CODP(m).\\
 &   \AVAR{\plusnat}(m,n+1,r+1) \Leftarrow \CODP(m)\land \CODP(n)\land \CODP(r)\land \AVAR{\plusnat}(m, n, r).\\
  & \AVAR{\lt}(0, n+1)\Leftarrow \CODP(n).\qquad \AVAR{\lt}(m+1, n+1) \Leftarrow \CODP(m)\land \CODP(n)\land \AVAR{\lt}(m,n).\\
  &  \false \Leftarrow \CODP(m)\land \CODP(n)\land \CODP(r)\land \AVAR{\plusnat}(m, n, r)\land \AVAR{\lt}(r, m).\\
  & \CODP(0). \qquad \CODP(n+1) \Leftarrow \CODP(n).
\end{align*}
Here, we have replaced each variable or term of sort \(\nat\) with one of sort \(\stypeint\) by applying the catamorphism.
We have also added \(\CODP(x)\) for each universally quantified variable \(x\) (whose sort is \(\nat\) in the original
CHCs and \(\stypeint\) after the abstraction); this is for the purpose of
restricting the range of the variable to the image of the catamorphism.
(In other words, a formula \(\forall x^\nat.\varphi(x)\) is abstracted to \(\forall x^\stypeint.\CODP(x)\Rightarrow \varphi'(x)\)
where \(\varphi'(x)\) is an abstract version of \(\varphi(x)\).)
The definition of \(\CODP\), called the \emph{\(\cata\)-admissibility predicate}, is obtained automatically from the following predicate that should be
satisfied by every variable of sort \(\nat\), which was implicit in the original CHCs.
\begin{align*}
&  \CODPo(Z). \qquad \CODPo(S(x)) \Leftarrow \CODPo(x).
\end{align*}
Note that, without the predicate \(P\), the abstraction would be too coarse.
For example, consider a CHC \( \forall x^{\nat}.\: \false \Leftarrow S(x) = Z\), which is valid (as \(S(x)=Z\) never holds).
Without the \(P\) predicate, however, it would be abstracted to
\( \forall x^{\stypeint}.\: \false \Leftarrow x+1=0\), which is invalid.

The abstracted CHCs above are satisfiable with the following model:
\begin{align*}
  \AVAR{\plusnat}(m,n,r) \Def m + n = r \qquad
  \AVAR{\lt}(x, y) \Def x < y \qquad
  \CODP(n) \Def n \geq 0.
\end{align*}
We can therefore conclude that the original CHCs are also satisfiable, with the following model:
\begin{align*}
  \plusnat(m,n,r) \Def \cata(m) + \cata(n) = \cata(r) \qquad
  \lt(x, y) \Def \cata(x) < \cata(y).
\end{align*}

A remaining issue is how to abstract the primitive equality predicate \(=_{\nat}\). We can simply replace it with \(=_{\stypeint}\);
this is a sound abstraction, since \(x=_\nat y\) implies \(\cata(x)=_{\stypeint}\cata(y)\). In contrast, \(x\ne_\nat y\) does NOT
imply \(\cata(x)\ne_{\stypeint}\cata(y)\); that is why we exclude out \(\ne_\delta\) from the set of primitive predicates
(recall Section~\ref{sec:pre}), and encode the inequality by using CHCs.

\subsection{Counterexample Generation}
\label{sec:cex-gen-overview}

Now, we consider the case where the abstracted CHCs are unsatisfiable.
If the catamorphism \(\banana{0, \lambda x.0}\) (which maps all the natural numbers to \(0\))
were used instead of \(\banana{0, \lambda x.x+1}\),
the original CHCs would be abstracted to the following CHCs over \( \TZ \), which are unsatisfiable.
\begin{align*}
    (i')\ &  \AVAR{\plusnat}(m,0,m) \Leftarrow \CODP(m).\\
  (ii')\ & \AVAR{\plusnat}(m,0,0) \Leftarrow \CODP(m) \land \CODP(n) \land \CODP(r) \land \AVAR{\plusnat}(m, n, r).\\
  (iii')\ & \AVAR{\lt}(0, 0)\Leftarrow \CODP(n). \qquad\
  (iv') \ \AVAR{\lt}(0, 0) \Leftarrow \CODP(m) \land \CODP(n) \land \AVAR{\lt}(m,n).\\
 (v') \ & \false \Leftarrow \CODP(m) \land \CODP(n) \land \CODP(r) \land \AVAR{\plusnat}(m, n, r)\land \AVAR{\lt}(r, m).\\
    (vi')\ & \CODP(0). \qquad (vii')\ \CODP(0)\Leftarrow \CODP(m).
\end{align*}
Notice that the abstracted CHCs \( (i') - (v') \) correspond to the original CHCs \( (i) - (v) \).

Suppose the following (SLD-)resolution proof was generated by a CHC solver as a witness of unsatisfiability.
\[
\infer[(vi')]{\HFalse\Leftarrow
  m=n=r=n'=0}
{\infer[(iii')]{\HFalse\Leftarrow
    \CODP(m) \land \CODP(n) \land \CODP(r) \land n = 0 \land m=r\land \CODP(m) \land m=r = 0  \land \CODP(n')}
  {\infer[(i')]{\HFalse\Leftarrow
      \CODP(m) \land \CODP(n) \land \CODP(r) \land n = 0 \land m=r\land \CODP(m) \land \AVAR{\lt}(r, m)}
    {\HFalse\Leftarrow \CODP(m) \land \CODP(n) \land \CODP(r) \land \AVAR{\plusnat}(m, n, r)\land \AVAR{\lt}(r, m)}}}
\]
The derivation starts with the goal clause, and
in the last step, the resolution on multiple occurrences of \(\CODP\) has been performed in one step.
We have indicated which CHC has been used in each resolution step.
Note that the last clause is invalid, which indicates that the abstracted CHCs are unsatisfiable.

From the resolution proof for the abstract CHCs above,
we construct the following candidate of a resolution proof for the unsatisfiability of the original CHCs,
by applying the corresponding clause
of the original CHCs (i.e., the clauses (i) and (iii) instead of (i') and (iii') respectively) except for the clauses on
the \( \cata \)-admissibility predicate.
\[
\infer[(iii)]{\HFalse\Leftarrow
  m = r \land n = Z \land r = Z \land m = S(n')}
{\infer[(i)]{\HFalse\Leftarrow
    m = r \land n = Z \land \lt(r, m)}
  {\HFalse\Leftarrow \plusnat(m, n, r) \land \lt(r, m)}}
\]
The right-hand side of the last clause can be simplified to:
\(
\constraint \Def %
Z = S(n')
\).
We call \( \constraint \) a \emph{counterexample} (against the satisfiability of the original CHCs);
it serves as a possible witness of the unsatisfiability of the original CHCs,
in the sense that if \(\constraint\) were satisfiable, we could conclude that the original CHCs were unsatisfiable.
In this case, however, \( \constraint \) is unsatisfiable;
the abstract CHCs yielded a spurious resolution proof for the original CHCs.

\subsection{Synthesis}
\label{sec:refinement-overview}

For the example in the previous subsection (where \(\theta \equiv Z = S(n')\)), we need to find a catamorphism \(\cata'=\banana{\combine_Z,\combine_S}\) such that
\[\exists n^\nat.\, \combine_Z=\combine_S(\cata'(n))\]
is invalid.
By preparing a template
\(
  \combine_{Z} = a \mbox{ and }\combine_{S} (\AVAR{n}) =  b \times \AVAR{n} + c
  \)
  for \(\combine_{Z}\) and \(\combine_{S}\),
  the problem above is reduced to the satisfiability problem:
  \[
\exists a,b,c.\,  \forall n^\nat.\,a \ne b\times\banana{\combine_Z,\combine_S}(n) + c.
  \]

  Solving the above satisfiability problem is costly, since it is an \(\exists\forall\)-formula and also involves the
  recursive function \(\banana{\combine_Z,\combine_S}\).
We thus provide a procedure for solving this \(\exists\forall\)-formula based on counterexample-guided inductive synthesis (CEGIS), which we will explain in detail in Section~\ref{sec:refinement}.
This procedure may return \( a = 0 \) and \( b = c = 1 \) as a witness and synthesize a new catamorphism \(\banana{0, \lambda x.x+1}\).
We then go back to the abstraction step of the CEGAR cycle, and in this case, succeed in proving the satisfiability of
the original CHCs, as explained in Section~\ref{sec:abstraction-overview}.

\section{Abstraction}
\label{sec:catalia-abstraction}

This section explains more details about the abstraction step.

In this section,
we use the following more tricky example as a running example.
To the best of our knowledge,
most of the previous approaches~\cite{eldarica,spacer,KostyukovMF21,KSG22} struggle with the example,%
while our approach can easily solve it.

\begin{example}
    \label{ex:eq-odd-even-sum}
    We consider the following CHCs:
\begin{align*}
    & G(\nil, 0, 0).\qquad
 G(\cons(x, l), x + n, m) \Leftarrow G(l, m, n).\\
    & \sym{Gen}(\nil, 0). \qquad
     \sym{Gen}(\cons(x, \cons(x - 1, l)), n) \Leftarrow \sym{Gen}(l, n - 1). \\
    & \HFalse \Leftarrow m - n\ne x \land x \geq 0 \wedge \sym{Gen}(l, x) \wedge G(l, m, n).
\end{align*}
Here, \( G \) and \(\sym{Gen}\) are predicate symbols of sorts \( \ilist \times \stypeint \times \stypeint \to \stypebool \), and  \( \ilist \times \stypeint \to \stypebool \) respectively.
Intuitively, \(G(l,m,n)\) means that the sums of elements of \(l\) at even and odd indices
are \(m\) and \(n\) respectively (where an index starts from \(0\)).
The predicate \(\sym{Gen}(l,n)\) holds if \(l\) is a list of the form
\([m_0;m_0';\cdots;m_{n-1};m'_{n-1}]\) where \(m_i = m'_i+1\) for each \(i=0,\ldots,n-1\).
The last clause asserts that if \(\sym{Gen}(l,x)\) holds, then
the difference between the sums of elements at even and odd indices is \(x\).
The above system of CHCs is satisfiable, where the models of \(G\) and \(\sym{Gen}\) are
as informally explained above.

Let \( \cataeo \) be \( \banana{\combine_{\nil}, \combine_{\cons}} \) where \( \combine_{\sym{nil}} = 0 \) and \( \combine_{\sym{cons}}(x, l) = x - l \).
This catamorphism is sufficient for proving the satisfiability of the CHCs above.
\end{example}

Suppose the approximation degree is \( N \), and we have a catamorphism \( \banana{\combine_1, \dots, \combine_k} \), which is a map from \( \universe{\herbrand}_{\dsort} \) to \( \Int^{N} \).
We define an abstraction of CHCs \( \cataA \) induced by \( \cata \).

Before applying the abstraction, we add
the atom \(\CODPo(x)\), where \(\CODPo \) is a unary predicate that takes a value of sort \( \dsort \), to the body of each clause, for each variable \(x\) of sort \(\delta\).
Intuitively, \(\CODPo(x)\) means that \(x\) ranges over the set of
terms of sort \(\delta\). We add the following clause for
each constructor \(\CONS_i\) of sort
\(
    ( \overbrace{\stypeint \times \dots \times \stypeint}^{m_i} \times \overbrace{\dsort \times \dots \times \dsort}^{n_i} ) \to \dsort
\).
\[ \CODPo(\CONS_i(x_1,\ldots,x_{m_i},y_1,\ldots,y_{n_i}))\Leftarrow
\CODPo(y_1)\land \cdots \land \CODPo(y_{n_i}).\]
It ensures that the least model of \(\CODPo\) indeed has the meaning described above.
Obviously, the CHCs augmented with \(\CODPo\) is equi-satisfiable with the
original CHCs.
Nonetheless, \( \CODPo \) is added because the abstraction of \(\CODPo\) yields the ``\(\cata\)-admissibility predicate'' mentioned
in Section~\ref{sec:overview}.
The abstraction of \( \CODPo(x) \) ensures the integer variables obtained by abstracting \( x \) to range over the image of the catamorphism.

\begin{example}
\label{ex:augmented}
  For the example in Example~\ref{ex:eq-odd-even-sum}, the augmented CHCs are:
  \begin{align*}
    (i)\ &  G(\nil, 0, 0).\qquad
 (ii)\ G(\cons(x, l), x + n, m) \Leftarrow P_\ilist(l)\land G(l, m, n).\\
 (iii)\ &  \sym{Gen}(\nil, 0). \\
     (iv)\  & \sym{Gen}(\cons(x, \cons(x - 1, l)), n) \Leftarrow P_\ilist(l)\land \sym{Gen}(l, n - 1). \\
     (v)\ &  \HFalse \Leftarrow m - n\ne x \land x \geq 0 \wedge P_\ilist(l)\land
      \sym{Gen}(l, x) \wedge G(l, m, n).\\
      (vi)\ & P_\ilist(\nil).\qquad
      (vii)\  P_\ilist(\cons(x, l))\Leftarrow P_\ilist(l).
\end{align*}
\end{example}
The augmented CHCs are abstracted as follows.
Let \( \clause \) be a clause \( \forall \seq{x^{\stypeint}}, \seq{y}^{\dsort}.\: \head \Longleftarrow \constraint \wedge P_1(\seq{t_1}) \land \dots \land P_n(\seq{t_n}) \), and
\( \CENV \) be a variable abstraction environment, which is a finite map from variables to \( N \)-tuples of variables, such that
\(\CENV(y_i)=(y^i_1, \dots, y^i_\NAPPROX)  \) for each \(y_i\in \set{\seq{y}^\dsort}=\set{y^1,\ldots,y^l}\).
We define \( \cataA(\clause) \) as
\begin{align*}
    \forall \seq{x}, \seq{y}^{i}_{j}.\:
    \cataAFB(\head) &\Longleftarrow
    \cataAFB(\constraint) \land \cataAFB(P_1(\seq{t}_1)) \land \dots \land \cataAFB(P_n(\seq{t}_n)).
\end{align*}
Here, \(\seq{y}^{i}_j\) denotes
\(y^1_1, \dots, y^1_\NAPPROX,\ldots,y^l_1, \dots, y^l_\NAPPROX\),
and
the abstraction \(\cataAFB\) for atoms, constraint formulas, and terms is defined by:
\begin{align*}
  &  \cataAFB(P(\seq{t}))=\AVAR{P}(\cataAT(\seq{t}))\qquad
    \cataAFB(\true)  =  \true \qquad \qquad
    \cataAFB(\false)  =  \false
    \\
 &   \cataAFB(t_1 =_{\dsort} t_2) = \cataAT(\dterm_1) =_{\stypeint^{\NAPPROX}} \cataAT(\dterm_2)
    \\
  &  \cataAFB(\predicate(\arith_1, \dots, \arith_l)) = \predicate(\arith_1, \dots, \arith_l) \qquad \mbox{ (if \( \predicate \) is a built-in predicate on integers)}
    \\&
    \cataAFB(\form_1 \land \form_2) = \cataAFB(\form_1) \land \cataAFB(\form_2)
    \qquad  \cataAFB(\form_1 \lor \form_2) = \cataAFB(\form_1) \lor \cataAFB(\form_2)\\
    &    \cataAT(x) = \left\{\begin{array}{ll}
     \CENV(x) & \mbox{if \( x \in \dom(\CENV) \) }\\
    x &\mbox{otherwise} \end{array}\right.\qquad
    \cataAT(\arith)  = \arith  \mbox{ (if \( \arith \) is an integer term)}
    \\
   & \cataAT(\CONS_i(\arith_1, \dots, \arith_{m_i}, \dterm_1, \dots, \dterm_{n_i}))
      =  \combine_i(\arith_1, \dots, \arith_{m_i}, \cataAT(\dterm_1), \dots, \cataAT(\dterm_{n_i})).
\end{align*}
As defined above, we just recursively replace
each constructor \(\CONS_i\) with the corresponding structure map \(\combine_i\),
and the equality \(=_{\dsort}\) on ADT with the equality \(=_{\stypeint^{\NAPPROX}}\)
on integer tuples.

\begin{example}
  \label{ex:abstracted}
  Recall the augmented CHCs in Example~\ref{ex:augmented}
  and the catamorphism \(\cataeo\) given in Example~\ref{ex:eq-odd-even-sum}.
  We obtain the following abstracted CHCs.
  \begin{align*}
    & \AVAR{G}(0, 0, 0).\qquad
 \AVAR{G}(x-l, x + n, m) \Leftarrow \AVAR{P}_\ilist(l)\land \AVAR{G}(l, m, n).\\
    & \AVAR{\sym{Gen}}(0, 0). \qquad
     \AVAR{\sym{Gen}}(x-((x-1)-l), n) \Leftarrow \AVAR{P}_\ilist(l)\land \AVAR{\sym{Gen}}(l, n - 1). \\
     & \HFalse \Leftarrow m - n\ne x \land x \geq 0 \wedge \AVAR{P}_\ilist(l)\land
     \AVAR{\sym{Gen}}(l, x) \wedge \AVAR{G}(l, m, n).\\
     &  \AVAR{P}_\ilist(0).\qquad \AVAR{P}_\ilist(x-l)\Leftarrow \AVAR{P}_\ilist(l).
  \end{align*}
  They have the following model.
  \[
  \AVAR{G}(l,y,z) \Def l=y-z\qquad \AVAR{\sym{Gen}}(l,y)\Def l=y
  \qquad \AVAR{P}_\ilist(l) \Def \true.\]
  We have \( \AVAR{P}_\ilist(l) \Def \true \) because the image of \( \cataeo \) is \( \mathbb Z \), but we note that this is not the case in general.\footnote{For example, a model for the predicate \( \CODP \) in Section~\ref{sec:abstraction-overview} is given by \( \CODP(x) \Def x \ge 0 \).}

  By the soundness theorem given below, we can conclude that the original
  CHCs given in Example~\ref{ex:eq-odd-even-sum} are satisfiable.
Indeed, the following is the model for the original CHCs.
    \begin{align*}
        G(l, y, z) \Def \cataeo(l) = y - z
        \quad
        \mbox{and}
        \quad
        Gen(l, y) \Def \cataeo(l)  = y.
        \end{align*}
\end{example}

We now discuss the soundness of abstraction.
The following lemma follows immediately from the above construction.
\begin{lemma}
    For any system \( \system \) of CHCs, \( \cataA(\system) \) is a system of CHCs defined over integer arithmetic.
\end{lemma}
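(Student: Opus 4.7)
The plan is a straightforward structural argument: I would show that the clause-wise abstraction $\cataA$ preserves the syntactic shape of a CHC while eliminating every occurrence of the ADT sort $\dsort$. Since the definition of $\cataA$ is given clause by clause, it suffices to verify that for an arbitrary clause $\clause$ of the form $\forall \seq{x^{\stypeint}}, \seq{y^{\dsort}}.\: \head \Leftarrow \constraint \land P_1(\seq{t_1}) \land \cdots \land P_n(\seq{t_n})$, the result $\cataA(\clause)$ is again a CHC and that every sort mentioned in it is $\stypeint$.

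First I would check the two shape conditions for being a CHC. The head $\head$ is either $\HFalse$ or a predicate application $P(\seq{t})$; by the definition of $\cataAFB$, these are mapped to $\HFalse$ and $\AVAR{P}(\cataAT(\seq{t}))$ respectively, so the head of $\cataA(\clause)$ has the required form. The body is a conjunction of a quantifier-free constraint formula and predicate applications; by inspection of the clauses defining $\cataAFB$ on $\true$, $\false$, conjunction, disjunction, integer predicates, and ADT equality, $\cataAFB(\constraint)$ is again quantifier-free, and each $\cataAFB(P_i(\seq{t_i})) = \AVAR{P_i}(\cataAT(\seq{t_i}))$ is a predicate application. Finally the universal quantifiers over $\seq{x^{\stypeint}}$ are retained, while the quantifiers over $\seq{y^{\dsort}}$ are replaced by the quantifiers $\forall \seq{y}^i_j$ over integer variables produced by $\CENV$.

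Next I would argue by structural induction that $\cataAT(t)$ is a term over integer arithmetic (in fact, an $N$-tuple of such terms when $t$ has sort $\dsort$, and an integer term when $t$ has sort $\stypeint$). The base cases are immediate: integer variables and arithmetic terms are left untouched, and an ADT variable $y \in \dom(\CENV)$ is replaced by the tuple $\CENV(y)$ of integer variables. For the inductive step on $\CONS_i(\arith_1,\ldots,\arith_{m_i},\dterm_1,\ldots,\dterm_{n_i})$, the induction hypothesis gives integer tuples $\cataAT(\dterm_j)$, and by the typing of the structure map $\combine_i$, applying $\combine_i$ to these together with the integer arguments yields an $N$-tuple of integer terms. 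With this in hand, the $\cataAFB$ clauses for $=_\dsort$, integer predicates, and predicate applications clearly produce formulas whose only sorts are $\stypeint$.

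The argument involves no real obstacle; the only point that needs a bit of care is the bookkeeping around the approximation degree $N$ and the environment $\CENV$, namely that $\cataAT$ consistently produces $N$-tuples on ADT terms so that $=_{\stypeint^N}$ and the abstracted predicate applications $\AVAR{P}$ are well-sorted. Once that is observed, the result follows by a direct induction on terms followed by a case analysis on the shape of formulas and heads.
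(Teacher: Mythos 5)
Your proposal is correct and is exactly the routine structural verification that the paper has in mind when it states that the lemma ``follows immediately from the above construction'' (the paper gives no explicit proof). Spelling out the induction on terms and the case analysis on heads, constraints, and quantifiers is the natural way to make that claim precise, and your bookkeeping of the approximation degree $N$ and the environment $\CENV$ is the only point requiring care, as you note.
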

The following theorem states the soundness, and also describes how a model of the original CHCs can be constructed
from that of the abstract CHCs.
\begin{theorem}[Soundness]
    \label{th:cata-soundness}
    Let \( \system \) be a system of CHC over \( \TADTZ \).
    If \(\cataA(\system)\)  is satisfiable, then so is \( \system \).
    Furthermore, from a model \( \model_1\) of \( \cataA(\system) \), we can construct a model \( \model_2 \) for \( \system \).
    The interpretation of a predicate \( \PREDSYM \colon \dsort^m \times \stypeint^n \) in \( \model_2 \) is given as a function \( \PREDSYM^{\model_2}(x_1, \ldots, x_m, \seq{y}) = \AVAR{\PREDSYM}^{\model_1}(\cata(x_1), \ldots, \cata(x_m), \seq{y})\), where \( \AVAR{\PREDSYM}^{\model_1} \) is the interpretation of \( \AVAR{\PREDSYM} \) in \( \model_1 \).
\end{theorem}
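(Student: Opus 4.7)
The plan is to construct $\model_2$ as specified in the statement and verify that each clause of $\system$ holds in $\model_2$. Since augmenting $\system$ with $\CODPo$ preserves satisfiability, it suffices to verify this for the augmented clauses, additionally interpreting $\CODPo^{\model_2}(x) \Def \AVAR{\CODPo}^{\model_1}(\cata(x))$.

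The technical core consists of two lemmas. Fix any assignment $\sigma$ sending each integer variable to an integer and each ADT variable $y_i$ to a ground term $b_i \in \DDOM$, and let $\sigma'$ be the corresponding assignment to the abstracted variables: $\sigma'$ agrees with $\sigma$ on integer variables and sends $y^i_j$ to the $j$-th component of $\cata(b_i)$. The first lemma is the commutation of abstraction with semantics: for any term $t$,
\[
    \sem{\cataAT(t)}_{\sigma'} =
    \begin{cases}
        \cata(\sem{t}_\sigma) & \text{if } t \text{ has sort } \dsort, \\
        \sem{t}_\sigma & \text{if } t \text{ has sort } \stypeint.
    \end{cases}
\]
This is proved by a straightforward induction on $t$, where the constructor case $t = \CONS_i(\seq{a}, \seq{t'})$ matches exactly because $\cataAT$ replaces $\CONS_i$ with the structure map $\combine_i$, which is precisely the defining clause of $\cata$ for $\CONS_i$. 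The second lemma states that $\AVAR{\CODPo}^{\model_1}(\cata(b))$ holds for every $b \in \DDOM$; it is proved by induction on $b$, invoking for each constructor case the corresponding abstracted admissibility clause in $\cataA(\system)$.

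With these lemmas in hand, we show that every augmented clause
\[
    \clause \;=\; \forall \seq{x},\seq{y}.\, H \Leftarrow \constraint \land \textstyle\bigwedge_i \CODPo(y_i) \land \bigwedge_j P_j(\seq{t}_j)
\]
holds in $\model_2$. Fix $\sigma$ and assume the body holds in $\model_2$ under $\sigma$. Using the term commutation lemma, the definition $P^{\model_2}(\ldots) = \AVAR{P}^{\model_1}(\cata(\ldots))$, and the sound direction of ADT equality ($b_1 = b_2$ implies $\cata(b_1) = \cata(b_2)$), each conjunct of the body transfers to its abstracted counterpart in $\model_1$ under $\sigma'$; the admissibility atoms $\AVAR{\CODPo}(\cata(b_i))$ in the abstracted body are supplied by the second lemma. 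Since $\model_1 \models \cataA(\clause)$ and $\sigma'$ is a legal assignment, the abstracted head holds in $\model_1$ under $\sigma'$. Applying the commutation lemma and the defining equation of $P^{\model_2}$ in the opposite direction then yields $H$ in $\model_2$ under $\sigma$.

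The main obstacle, though largely bookkeeping, is handling the ADT equality atom $t_1 =_\dsort t_2$: because $\cata$ need not be injective, only the direction from original equality to abstract equality is sound, so this atom must appear only positively in the body in order for the transfer to go through. This is precisely why $\ne_\dsort$ has been excluded from the primitive predicates in Section~\ref{sec:pre}. Once this asymmetry is respected, the soundness argument reduces to careful verification using the two lemmas above.
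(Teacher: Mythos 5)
Your proposal is correct and follows essentially the same route as the paper's proof in Appendix~\ref{appx:soundness}: the same concretized model, the same term-commutation lemma (Lemma~\ref{lem:soundness-term}), the same admissibility lemma establishing \( \AVAR{\CODPo}^{\model_1}(\cata(b)) \) for all ground \( b \) (the used half of Lemma~\ref{lem:admissibility-appx}), and the same clause-by-clause transfer via a pair of corresponding valuations, including the correct observation that \( =_\dsort \) transfers only in the sound direction. No gaps.
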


A proof is given in \iflong Appendix~\ref{appx:soundness}.
\else the longer version of this paper~\cite{sas2025arxiv}. \fi
Here we provide an informal argument.
Notice that, for any predicate
\( \PREDSYM \colon \dsort^m \times \stypeint^n \), ground terms \(t_1,\ldots,t_m\) and integers \(\seq{k}\),
\(\PREDSYM(t_1,\ldots,t_m,\seq{k})\) holds under \(\model_2\) (i.e.,
\(\PREDSYM^{\model_2}(t_1, \ldots, t_m, \seq{k})\) holds) if and only if
\(\cataA(\PREDSYM(t_1,\ldots,t_m,\seq{k})) = \AVAR{\PREDSYM}(\cata(t_1),\ldots,\cata(t_m),\seq{k})\) holds under \(\model_1\)
(i.e., \(\AVAR{\PREDSYM}^{\model_1}(\cata(t_1),\ldots,\cata(t_m),\seq{k})\) holds).
For any ground constraint formula \(\theta\), \(\neg \cataA(\theta)\) implies \(\neg \theta\).
Thus, for any ground clause
\( \head \Leftarrow \constraint \land P_1(\seq{t_1}) \land \cdots \land P_k(\seq{t_k})\),
if \(\model_1\) is a model of
\( \cataA(\head) \Leftarrow \cataA(\constraint) \land \cataA(P_1(\seq{t_1})) \land \cdots \land \cataA({P}_k(\seq{t_k}))\),
then \(\model_2\) is a model of
\( \head \Leftarrow \constraint \land P_1(\seq{t_1}) \land \cdots \land P_k(\seq{t_k})\).

Recall the model of the original CHCs in Example~\ref{ex:abstracted}.
The interpretation
\( G^{\model_2}(l, x, y) \Def %
(\cataeo(l) = x - y) \)
for \(G\) has been obtained from
\( \AVAR{G}^{\model_1}(l, x, y) \Def (l = x - y) \), by just replacing \(l\) with \(\cataeo(l)\)
based on the theorem above.

\section{Template-based and Counterexample-guided Catamorphism Synthesis}
\label{sec:cata-synthesis}
\label{sec:refinement}
\newcommand*{\param}{a}
\newcommand*{\paramTwo}{b}

We adopt a \emph{template-based approach} to catamorphism synthesis; we prepare a set of predefined template catamorphisms and derive constraints for \( \cata \) to satisfy.
As detailed later, such constraints involve universal quantifiers over ADTs and recursively defined functions, which are difficult for SMT solvers to handle.
To address this problem, we employ a \emph{counterexample-guided approach with testing}.

\subsection{Template-based Catamorphism Synthesis}

We define a \emph{template catamorphism} as a catamorphism parameterized by integers. %
Here, each constructor \( \CONS_i \) has an associated \emph{template structure map} represented by a tuple of open terms \( (t_i^{(1)}[\seq{x}, \seq{a}], \ldots, t_i^{(N)}[\seq{x}, \seq{a}]) \) of sort \( \stypeint \times \dots \times \stypeint \)
 where \( \seq{x} \) are arguments of the catamorphism and \( \seq{a} \) are integer parameters.

Let \( \cataTMP \) be a template catamorphism, and \( \model \) be an assignment of the parameters \( \seq{a} \) to integers.
A catamorphism \( \model(\cataTMP) \), a catamorphism obtained by substituting each parameter \( a_i \) with \( \model(a_i) \) in \( \cataTMP \),
is said to be an \emph{instantiation} of the template catamorphism \( \cataTMP \).

We often use \emph{linear} template catamorphisms, which are
 template catamorphisms whose associated template structure maps are affine functions.
We denote a linear template catamorphism of degree \(1\) by \(\linearTMP\).
\begin{example}
\label{ex:cata-linear-template-catamorphism}
For \(\sym{ilist}\), \(\linearTMP\) is given by the template structure maps:
\[
\combine_{\nil} = d, \quad \combine_{\cons}(x, l) = a \times l + b \times x + c
\]
with parameters \( a, b, c, d \).
\end{example}

\begin{remark}
    \label{rem:implementation-template}
    In practice, we design our templates based on a trade-off between the expressive power of the abstraction and the cost of catamorphism synthesis.
    As explained in Section~\ref{sec:cata-eval},
    our implementation uses restricted linear template catamorphisms \( \linearTMP_{[a, b]} \)
    to efficiently explore the search space, where
    the range of each parameter of \( \linearTMP \) is limited to \( [a, b] \).
    For example, a restricted linear template catamorphism for \( \ilist \), denoted by \( \linearTMP_{[-1, 1]} \), is given by:
    \[
        \combine_{\nil} = d, \quad \combine_{\cons}(x, l) = a \times l + b \times x + c
        \quad
        \mbox{where } a, b, c, d \in [-1, 1].
    \]

    We gradually increase the expressiveness of the templates by increasing the parameter ranges and the approximation degree, until they suffice to prove the satisfiability of the given CHCs.
    In our implementation, we prepare the following sequence of template catamorphisms:
    \begin{align*}
    &\linearTMP_{[-1, 1]},
    (\linearTMP_{[-1,1]}, \linearTMP_{[-1,1]}),
    (\linearTMP_{[-1,1]},\linearTMP_{[-1,1]},\linearTMP_{[-1,1]}),
    (\linearTMP_{[-2,2]}, \linearTMP_{[-2,2]}, \linearTMP_{[-2,2]}),
    \\
    &(\linearTMP_{[-4,4]}, \linearTMP_{[-4,4]}, \linearTMP_{[-4,4]}),\ldots.
    \end{align*}
    Here, we assume each \( \linearTMP_{[a, b]} \) has its own unique set of parameters, and
    an \(N\)-tuple of template catamorphisms \( (\cataTMP_1, \dots, \cataTMP_k)\) represents a template catamorphism of \( N \)-approximation degree, defined by \( \slambda \seq{x}.\, (\cataTMP_1(\seq{x}), \dots, \cataTMP_k(\seq{x})) \).

    We could also consider templates containing disjunctive properties, but inferring such templates would be more costly;
    it is left for future work.

    Note that the templates are prepared and fixed in advance, and are independent of specific CHCs or data structures.
    The parameters required for each constructor are determined by its sort, and can be derived automatically.
\end{remark}

We introduce a constraint generation map \( \encoderalt{\cdot}{\cataTMP} \),
which is used later to find appropriate instantiations of template catamorphisms. It is defined by:
\begin{align*}
  & \encoderalt{\forall x.\theta}{\cataTMP} = \forall x.\encoderalt{\theta}{\cataTMP}\quad
  \encoderalt{t_1 =_{\dsort} t_2}{\cataTMP} = (\cataTMP(t_1) =_{\stypeint^N} \cataTMP(t_2)) \quad
\encoderalt{\predicate(\seq{\arith})}{\cataTMP} = \predicate(\seq{\arith})\quad   \\
&\encoderalt{\theta_1 \star \theta_2}{\cataTMP} = \encoderalt{\theta_1}{\cataTMP} \star \encoderalt{\theta_2}{\cataTMP} \text{ for } \star \in \{\land, \lor\} \quad \encoderalt{\lnot \theta}{\cataTMP} = \lnot \encoderalt{\theta}{\cataTMP}.
\end{align*}
Here, we assume that \( \cataTMP \) is a symbol of a recursively defined function of sort \( \dsort \to \stypeint^{N} \) that encodes the template catamorphism.
Note that in Section~\ref{sec:refinement} we use an extended form of constraint formulas (used only for our catamorphism synthesis) that allows the negation operator \( \lnot \) for convenience.

\begin{example}
    \label{ex:cata-synth-recdef}
    Let \( \constraint \) be the (extended) constraint formula \( \lnot(\cons(0, \cons(0, l)) =_{\dsort} \nil) \), and \( \cataTMP \) be the linear template catamorphism \( \bananaT{\combine_{\sym{nil}},\combine_{\sym{cons}}}{\set{a, b, c, d}} \) in Example~\ref{ex:cata-linear-template-catamorphism}.
   By applying \( \encoderalt{\cdot}{\cataTMP} \) to \(  \constraint \), we obtain \(
   \lnot( \cataTMP(\cons(0, \cons(0, l))) =_{\stypeint} \cataTMP(\nil)) \),
     which can be simplified to
    \begin{align*}
       a \times (a \times \cataTMP(l) + c) + c \ne_{\stypeint} d
\end{align*}
    by using the defining axioms of \( \cataTMP \).
\end{example}

\subsection{Counterexample-Guided Catamorphism Synthesis}

\begin{algorithm}[t]
    \caption{Counterexample-Guided Catamorphism Synthesis}
    \label{alg:cegar-overview}

    \begin{algorithmic}[1]
        \State \textbf{Input: } CHCs over \( \TADTZ \) $\system$
        \State \textbf{Output: } \texttt{satisfiable} / \texttt{unsatisfiable} / \texttt{unknown}
        \State $(C, S) \gets (\defaultCATA, \emptyset)$
        \For{\( \cataTMP \) in \( \templates \) }
        \State $\cache \gets \true $
        \Loop
            \State $\system' \gets \cataA(\system)$
            \State $r \gets \text{check\_sat\_chc}(\system')$
            \State \textbf{if} $r = \texttt{satisfiable}$ \textbf{then return} \texttt{satisfiable}
            \State $\constraint \gets \text{get\_cex}(r, \system)$
            \State \textbf{if} $\text{check\_sat\_smt}(\constraint)$ \textbf{then return} \texttt{unsatisfiable}

            \State $S \gets S \cup \set{ \forall\lnot \constraint}$
            \State $(C, \cache) \gets \text{synthesis}(\bigwedge S, C, \cache, \cataTMP)$
            \State \textbf{if} $(C, \cache)$ is \texttt{None} \textbf{then} \textbf{break}
        \EndLoop
        \EndFor
        \State \textbf{return} \texttt{unknown}
    \end{algorithmic}
\end{algorithm}

We now discuss the CEGAR procedure of \catalia{} in more detail, shown in Procedure~\ref{fig:cata-overview}.

The procedure maintains two internal states:
the current catamorphism \( C \) and a set \( S \) of the negations of counterexample formulas.
We call elements of \(S\) \emph{proof obligations}; they are valid formulas over ADTs, whose validity
should be preserved by the catamorphism-based abstraction.
Initially, we set \( C \) to the default catamorphism \( \defaultCATA \) and \( S \) to the empty set.
The choice of \( \defaultCATA \) is arbitrary.

We iterate over a sequence of template catamorphisms \( \templates \), which is prepared in advance as described in Remark~\ref{rem:implementation-template}.
For now, let us ignore \( \cache \) and focus on the inner loop (line 6-15).
The first part of the inner loop (line 7-11) is the same as described in Section~\ref{sec:overview}.
When a candidate counterexample \( \constraint \) for \( \system \) is spurious, we add \( \forall \lnot \constraint \) to the set \( S \) and proceed to the synthesis phase (line 12-13).
When the synthesis phase fails to find a new %
catamorphism, we break the loop and try another template catamorphism (line 14).

A notable difference from the standard CEGAR approach is to relax the goal of the synthesis procedure:
to tackle the challenges described below,
we allow it to return a catamorphism \( \cata \) that does not necessarily preserve the validity
of the proof obligation \(\bigwedge S\).
As a result, the same spurious counterexample \( \constraint \) might be encountered multiple times at line 10.
To prevent it,
we store a constraint formula \( \cache \), which accumulates information from the synthesis exploration.
This formula represents necessary conditions for the template parameters,
enabling the synthesis process to resume from its previous state when needed, as detailed below.

\subsubsection{Challenges in Catamorphism Synthesis}

Synthesizing a catamorphism from the proof obligation \(\bigwedge S\)
faces two main challenges:
\begin{enumerate}
    \item The proof obligation involves universal quantifiers over ADTs and recursive definitions, which SMT solvers struggle to handle.
    \item Even after synthesizing a catamorphism \( \cata \), checking whether \( \cata \)
      preserves the validity of \(\bigwedge S\) %
      remains costly as it still involves recursively defined functions and ADTs.
\end{enumerate}
To address these challenges, we adopt an approach proposed by Reynolds et al.~\cite{reynolds2015counterexample}, a variant of counterexample-guided inductive synthesis (CEGIS), combined with a lightweight testing approach.

\subsubsection{Procedure \texttt{synthesis}}
\label{sec:synthesis}

\begin{algorithm}[t]
    \caption{Procedure \texttt{synthesis}}
    \label{alg:cegar-catamorphism}

    \begin{algorithmic}[1]
        \State \bf{Input: } $ \constraint $, $ \cata $,   $ \cache $, and $ \cataTMP $
        \State \bf{Output: } $ \cata' $ and $ \cache' $

        \State $\constraint' \gets$ a quantifier-free formula such that %
        $\constraint \equiv \forall \seq{x}.\: \constraint' $ %
        \State $\texttt{timeout} \gets \infty$
        \Loop
            \State $[\seq{x} \mapsto \seq{v}] \gets \text{check\_sat\_with\_TO}(\lnot \encoderalt{\constraint'}{\cata}, \texttt{timeout})$ \Comment{Testing}
            \State \textbf{if} $[\seq{x} \mapsto \seq{v}]$ is None or timeout \textbf{then} \Return $(\cata, \cache)$
            \State $\texttt{timeout} \gets  \texttt{defaultTimeout}$
            \State $\cache \gets \cache \land \constraint''$
            where $ \theta'' \equiv [\seq{v}/\seq{x}]\encoderalt{\constraint'}{\cataTMP} $
            \Comment{\( \cataTMP \) does not occur in \( \theta'' \) }
            \State $\model \gets \text{check\_sat}(\cache)$ \Comment{SAT modulo NIA}
            \State \textbf{if} \( \model \) is None \textbf{then return None}
            \State $\cata \gets \model(\cataTMP)$
        \EndLoop
    \end{algorithmic}
\end{algorithm}

Procedure~\ref{alg:cegar-catamorphism} shows the synthesis procedure.
This procedure takes a proof obligation \( \constraint \), a current catamorphism \( \cata \),  a constraint formula \( \cache \), and a template catamorphism \( \cataTMP \) as inputs.
Here, \( \constraint \) and \( \cata \) satisfy \( \not \models \encoderalt{\constraint}{\cata} \) but \( \models \constraint \).
The goal of the procedure is to find a new catamorphism \( \cata' \) that is \emph{likely} to satisfy \( \models \encoderalt{\constraint}{\cata} \).

As in ordinary CEGIS, the procedure consists of two phases: (a) verification (line 6) and (b) synthesis (line 10).
The former checks whether the current candidate of \( \cata \) satisfies \( \forall \tilde{x}.\: \constraint' \) by checking whether \( \lnot\encoderalt{ \constraint'}{\cata} \) is satisfiable.
If so, we obtain ground terms \( \tilde{v} \) such that \( [\tilde{v}/\tilde{x}]\encoderalt{\constraint'}{\cata} \) is invalid, and update \( \cache \) to (a formula equivalent to)
\( \cache \land [\tilde{v}/\tilde{x}]\encoderalt{\constraint'}{\cataTMP} \).
This enables us to synthesize a new catamorphism in a counterexample-guided manner on line 10.

A difference from the standard CEGIS is that we give up checking the satisfiability of
\(\lnot \encoderalt{\constraint'}{\cata}\)
upon a time-out on line 6. This is because an SMT prover is not good at proving the unsatisfiability of
\(\lnot \encoderalt{\constraint'}{\cata}\). If the satisfiability check times out, then
we optimistically assume that \(\encoderalt{\constraint'}{\cata}\) is valid, and returns the
current catamorphism \( \cata \) as a candidate solution.
In that case, it remains unknown whether \(\encoderalt{\constraint}{\cata} \) is indeed valid;
thus, the same counterexample  as the previous might be encountered in Procedure~\ref{alg:cegar-overview} again.

To ensure the progress even with this relaxation, we accumulate the necessary conditions \(\cache\) for the template parameters
of \(\cataTMP\) inside the synthesis loop (line 9) and return \(\cache\) to
Procedure~\ref{alg:cegar-overview}.
This allows Procedure~\ref{alg:cegar-catamorphism} to resume from the previous state whenever Procedure~\ref{alg:cegar-overview} encounters the same counterexample again.
For the progress, we must additionally require
that the given catamorphism is refined at least once during each synthesis call. %
To this end, we initialize the timeout to \( \infty \) (line 4) and later reset it to a default finite value (line 8).
This is justified by the fact that \( \encoderalt{\constraint}{\cata} \) for the given catamorphism \( \cata \) is already known to be invalid by Procedure~\ref{alg:cegar-overview}; therefore, an SMT solver should be able to find a model for \(\neg \encoderalt{\constraint'}{\cata} \) on line 6.

\begin{example}
    Let \( \cata \) be \( \banana{\combine_{\nil, 0}, \combine_{\cons, 0}} \) where \( \combine_{\nil, 0} = 0 \) and \( \combine_{\cons, 0}(x, l) = 0 \), \( \constraint \) be \( \forall l^{\dsort}.\: \constraint' \) where \( \constraint' \Def \lnot (\cons(0, \cons(0, l)) =_{\dsort} \nil) \),
    \( \cache\) be \( \true \),
    and \( \cataTMP \) be the linear template catamorphism in Example~\ref{ex:cata-linear-template-catamorphism} for \( \ilist \).
    We execute Procedure~\ref{alg:cegar-catamorphism} with these inputs: \( \constraint \), \( \cata \), \( \cache \), \( \cataTMP \).
        We first check whether \( \lnot\encoderalt{\constraint'}{\cata} \equiv \cata(\cons(0, \cons(0, l))) =_{\stypeint} \cata(\nil) \)
        is satisfiable, and find a model \( [l \mapsto \nil] \).
        As all arguments of sort \( \dsort \) in the catamorphism applications within \( [\nil / l]\encoderalt{\constraint'}{\cataTMP} \) are ground terms,
   \( [\nil / l]\encoderalt{\constraint'}{\cataTMP} \) can be simplified to:
    \(
        a \times (a \times d + c) + c \neq d
    \).
    We then update \( \cache \) to \(a \times (a \times d + c) + c \neq d\).
    Suppose an SMT solver yields the following model for \(\cache\):
    \(
        \model \equiv \set{
            a \mapsto 0, b \mapsto 0, c \mapsto 1, d \mapsto 0
        }.
    \)
    Based on this model, we have a new catamorphism \( \cata_1 \) defined by \( \banana{\combine_{\nil,1}, \combine_{\cons,1}} \) where
    \(
        \combine_{\nil} = 0
    \)
    and
    \(
        \combine_{cons}(x, l) = 1.
    \)
    Now the backend SMT solver either proves that \( \models \encoderalt{\constraint}{\cata_1} \) holds or times out. Therefore, we return \( \cata_1 \) and \( \cache \) as the result of Procedure~\ref{alg:cegar-catamorphism}.
\end{example}

\subsection{Discussions}
We discuss properties of the overall procedure of \catalia{} in this subsection.
We have already shown the soundness of the procedure in Section~\ref{sec:catalia-abstraction} (Theorem~\ref{th:cata-soundness}).
Other important questions are:
\begin{enumerate}
\item Relative completeness: Let \(\system\) be a system of CHCs,
  and suppose that \(\cataA(\system)\) is satisfiable for some \(\cataA\) (where \(\cata\) belongs to the class of
  catamorphisms expressed by a given set of templates).
  Assuming that the backend CHC solver over integers and SMT solver were sound and complete, does \catalia{} eventually prove that \(\system\) is satisfiable?
\item Refutational completeness:  Let \(\system\) be a system of CHCs,
  and suppose that \(\system\) is unsatisfiable.
  Assuming that the backend CHC solver over integers and SMT solver were sound and complete, does \catalia{} eventually prove that \(\system\) is unsatisfiable?
\end{enumerate}
We need to make some modifications and further assumptions to guarantee
that relative completeness and refutational completeness hold.

We can ensure relative completeness
by ensuring that 
the values of template parameters
  are chosen from a \emph{finite} set, and the set and the approximation degree \(N\)
  are gradually increased (as explained in Remark~\ref{rem:implementation-template})
  when there is no solution for the constraints in the current template,
  so that the whole class of catamorphisms being considered is eventually covered.
Note that given a template catamorphism, the same catamorphism is not encountered again
since the necessary conditions are accumulated in \(\Theta\) in Procedure~\ref{alg:cegar-catamorphism}.
Thus, if \(\cataA(\system)\) is satisfiable for some \(\cataA\) and such \(\cataA\) is an instance of
the current template catamorphism, it will eventually be found.

Refutational completeness holds if the underlying solver for CHC (over integers)
generates resolution proofs in a \emph{fair} manner, in the sense that,
given an infinite sequence of unsatisfiable CHCs \(\system_1,\system_2,\ldots\)
if every \(\system_i\) has a resolution proof of the same ``shape''
(except for constraint formulas), then that resolution proof is eventually produced.
That is because if \(\system\) is unsatisfiable, then its augmented version also has a resolution
proof, and all of its abstractions \(\cataAalt{\cata_1}(\system),\cataAalt{\cata_2}(\system),\ldots\) have the resolution proof of the same shape.
Thus, by the assumption of fairness, that resolution proof is eventually generated.

The assumption on the fairness above may be too strong in practice; in fact,
we do not think an existing CHC solver satisfies that property.
A more reasonable requirement would be to ensure that a CHC solver generates
a resolution proof of the smallest size.
Then, it suffices to ensure that a resolution proof of the same shape is never re-encountered,
e.g., by removing a prior bound on the approximation degree \(N\) and
removing the timeout on line 9 of Procedure~\ref{alg:cegar-catamorphism}.

\section{Implementation and Evaluation}
\label{sec:cata-eval}

We have implemented \catalia{}, a solver for the satisfiability checking problem of CHCs over ADT and LIA.
In this section, we describe the implementation details and evaluate \catalia{} on the benchmark set from CHC-COMP 2024~\cite{chc-comp-2024}.
As stated in the abstract, \catalia{} was used as a core part of the tool called
ChocoCatalia, which won the ADT-LIA category of CHC-COMP 2025 (which is the only category ChocoCatalia participated in).
ChocoCatalia relies on another independent, complementary
technique to be reported elsewhere, but \catalia{} alone would have won the competition, judging from
the evaluation result reported below for the CHC-COMP 2024 benchmark.

\subsection{Implementation}

The solver consists of the following four components:
\begin{enumerate}
    \item[(i)] \textbf{Preprocessing:} To handle testers and selectors that are not directly supported by our framework, we have implemented preprocessing steps described in Section 4 of \cite{KostyukovMF21}.
    \item[(ii)] \textbf{Abstraction:} This is an implementation of the procedure described in Section~\ref{sec:catalia-abstraction}. We utilized a portfolio of Spacer~\cite{spacer}, Eldarica~\cite{eldarica}, and \hoice{}~\cite{ChampionCKS20} as the backend solver for satisfiability checking problems of CHCs over integer arithmetic.
    \item[(iii)] \textbf{Counterexample Generation:} We extract a (hyper-)resolution proof generated by Z3/Spacer~\cite{MouraB08,spacer}, when a system of abstracted CHCs over integer arithmetic is unsatisfiable. We parse the result, and obtain a counterexample \( \constraint \) as described in Section~\ref{sec:cex-gen-overview}.
    \item[(iv)] \textbf{Refinement:} Our implementation of the synthesis procedure from Section~\ref{sec:cata-synthesis} employs multiple linear templates (see Remark~\ref{rem:implementation-template}).
    The testing component (line 6-8 in Procedure~\ref{alg:cegar-catamorphism}) is executed by running Z3 with a timeout of one second.
\end{enumerate}

While the formalization in the previous sections was for CHCs over a single ADT \( \dsort \),
the implementation can handle general CHCs over ADTs, including those with mutually recursive definitions. %

\subsection{Evaluation}

We evaluated \catalia{} on the ADT-LIA division of \emph{CHC-COMP 2024}~\cite{chc-comp-2024}, which consists of 300 benchmark instances of satisfiable and unsatisfiable CHCs over algebraic data types and linear integer arithmetic.
The benchmark set is publicly available~\cite{chc-comp-2024-benchmark}.
For comparison, we selected three state-of-the-art CHC solvers that support the theory of ADTs and LIA: Spacer~\cite{spacer}, RInGen~\cite{KostyukovMF21}, and Eldarica~\cite{eldarica}.
All the experiments were conducted on a machine with Intel Xeon Gold 6242 CPU and 64GB of RAM. We set the timeout to 300 seconds.
Further details of the evaluation are publicly available~\cite{SAS25Artifact}.

\begin{table}[t]
    \centering
\caption{Number of Solved Instances}
\label{tab:catalia-result}
    \begin{tabular}{lrrrrr}
      \toprule
      Instance & \catalia{} & \ringen{} & \spacer{} & \eldarica{} \\
      \midrule

      \# SAT & 67 & 54 & 48 & 50\\
      \# UNSAT & 80 & 46 & 86 & 87\\
      \# ALL & 147 & 100 & 134 & 137\\
      \midrule
      \# UNIQUE (SAT) & 18 & 14 & 3 & 1\\
      \# UNIQUE (UNSAT) & 2 & 3 & 4 & 1\\

      \bottomrule
      \end{tabular}
\end{table}

\begin{figure}[t]
    \centering
    \begin{subfigure}[t]{0.49\textwidth}
      \includegraphics[width=\textwidth]{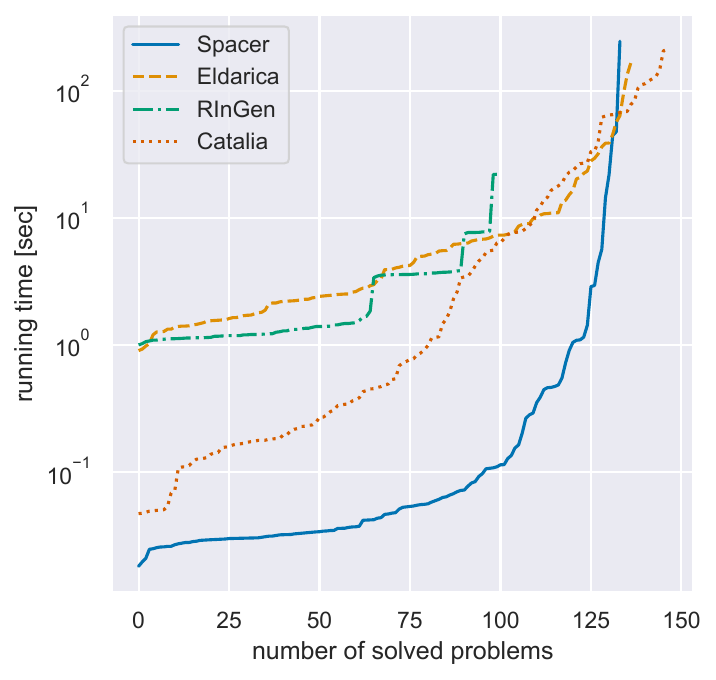}
      \caption{Results for All The Instances}
      \label{fig:catalia-result-all}
    \end{subfigure}\hfill
    \begin{subfigure}[t]{0.49\linewidth}
      \includegraphics[width=\textwidth]{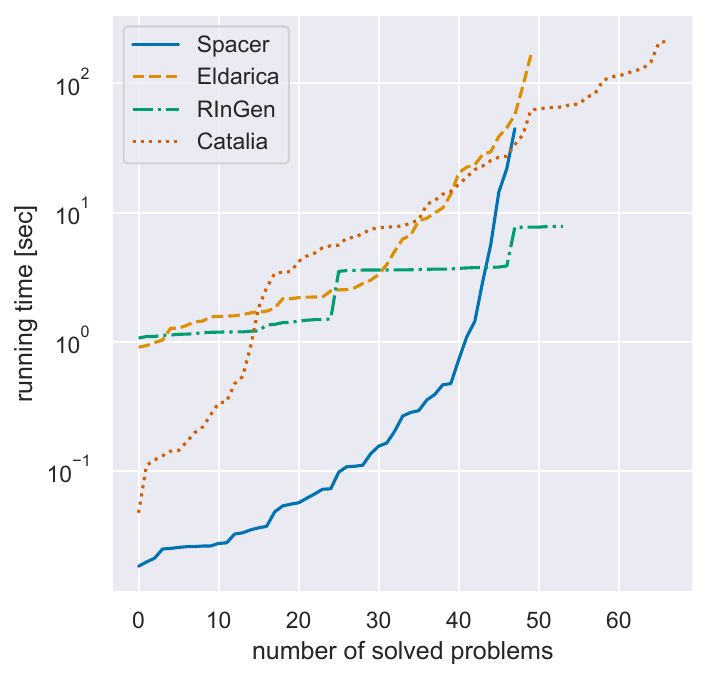}
      \caption{Results for SAT Instances}
      \label{fig:catalia-result-sat}
    \end{subfigure}
    \caption{Cactus plots. The horizontal axis shows the number of solved instances, and the vertical axis shows the time required to solve them.}
    \label{fig:catalia-result}
\end{figure}

The results, summarized in Table~\ref{tab:catalia-result} and Figure~\ref{fig:catalia-result}, show that \catalia{} performs
{particularly well on satisfiable instances}.
It solved 67 satisfiable instances, the most among all solvers.
Additionally, it uniquely solved 18 satisfiable instances that no other solver could solve successfully within 300 seconds.
Our approach primarily targets satisfiable CHCs, and the results confirm its effectiveness in this category.
Catamorphisms that \catalia{} successfully found include the list length, the sum of an integer list, the evenness of
the list length, and their combinations.
In terms of uniquely satisfiable instances, \catalia{} and RInGen are complementary because they handle different classes of invariants.
For example, \catalia{} can handle the list length, whereas RInGen cannot. Conversely, RInGen can handle invariants involving the last element of a list, which \catalia{} cannot.
The latter is due to \catalia{}'s restriction of template catamorphisms to linear ones (cf. Remark~\ref{rem:implementation-template}).
We expect that,  by extending templates with conditional expressions, \catalia{} will subsume RInGen's capability; we leave
this extension for future work.
For unsatisfiable instances, \catalia{} performed slightly worse than Spacer and Eldarica.
We also leave this issue for future work;
random testing techniques~\cite{KatsuraKSS24} may help address it.

Figure~\ref{fig:catalia-result} provides a detailed efficiency comparison across solvers.
Spacer demonstrates the fastest solving times overall, efficiently handling numerous instances.
However, its advantage lies primarily in speed rather than the number of solved satisfiable instances; in fact,
it uniquely solves only a few problems.
In contrast, \catalia{}, while slower in terms of solving time than Spacer, successfully solves more satisfiable instances, aligning with its design goal of handling more complex invariants.

\section{Related Work}
\label{sec:rel}

We discuss related work on SMT solvers and CHC solvers that support ADTs.

\subsection{SMT Solvers}

The theory of ADT has been incorporated into the SMT-LIB Standard~\cite{BarFT-RR-17}, the de facto standard language specification for SMT solvers,
and leading SMT solvers such as Z3~\cite{MouraB08}, CVC5~\cite{BarbosaBBKLMMMN22} and Princess~\cite{princess08} already support this theory.
Since Oppen's work~\cite{Oppen80}, various decision procedures~\cite{ZhangSM06,SuterDK10,SuterKK11,Leino12,ReynoldsK15,PhamGW16,yang2019lemma} have been proposed to handle ADTs.

Among these, several approaches~\cite{SuterDK10,SuterKK11,PhamGW16,ReynoldsK15} address satisfiability modulo ADT and recursively defined functions (RDFs).
The approaches by Reynolds and Kuncak~\cite{ReynoldsK15} and Yang et al.~\cite{yang2019lemma} tackled automated inductive reasoning on ADTs and RDFs by an efficient enumeration of lemmas.
Suter et al.~\cite{SuterDK10,SuterKK11} introduced an abstraction method based on catamorphisms, which is similar to our approach.
They also proposed decision procedures based on the abstraction, which were later refined by Pham et al.~\cite{PhamGW16}.
Our abstraction, however, differs in that it is tailored for CHCs (e.g., by introducing \( \cata \)-admissibility predicates).
In particular, our approach is capable of automatically synthesizing catamorphisms, leveraging the result of CHC solving.

\subsection{CHC Solvers}

Various approaches have been proposed to solve CHCs over ADTs in order to capture more complex properties of ADTs~\cite{eldarica,AngelisFPP18a,AngelisFPP25,KostyukovMF21,KSG22,ZavaliaCF23,ChampionCKS20,KobayashiW23}.
Eldarica~\cite{eldarica} utilizes size constraints that represent the size of a given term to capture properties such as the list length.
Size functions, which are also utilized in the decision procedures by Zhang et al.~\cite{ZhangSM06}, can be seen as a special case of the catamorphisms introduced in this paper.
De Angelis et al~\cite{AngelisFPP18a} proposed fold/unfold transformation with techniques such as difference predicates~\cite{AngelisFPP20} and catamorphic abstractions~\cite{AngelisFPP25} to efficiently transform CHCs over ADTs to those without ADTs.
However, their solver is not capable of yielding a model even when it successfully proves the satisfiability, and it requires users to manually supply catamorphisms.
Kostyukov et al.~\cite{KostyukovMF21} reduced satisfiability checking of CHCs to finite model finding of first-order logic, by approximating constructors with uninterpreted functions.
A notable limitation of this approach is its inability to combine the theory of ADT with other theories such as linear integer arithmetic (LIA) and arrays.
Krishnan et al.~\cite{KSG22} have proposed a Spacer-like procedure for CHC over ADT and RDFs that preserves the refutational completeness of the original Spacer algorithm.
While their approach requires users to provide catamorphisms, ours automatically synthesizes them.
Some approaches~\cite{ZavaliaCF23,ChampionCKS20} transform predicates in CHCs to RDFs,
thereby reducing the problem to checking the satisfiability of formulas over ADTs and RDFs.
However, as discussed above, solving such a formula in SMT solvers can be challenging.
Furthermore, syntactically transforming predicates to functions is difficult especially when the CHCs are generated from compiler intermediate representations (e.g., LLVM), where the functional structure is often lost.
Kobayashi and Wu~\cite{KobayashiW23} employed a machine learning technique to synthesize inductive invariants over lists.
They train a recurrent neural network using an ICE learning framework and extract a fold (catamorphism) function using the technique proposed by Kobayashi et al.~\cite{KobayashiSSU21}.
Although their approach can, in theory, synthesize general recursive functions, its scalability remains a significant challenge.

\section{Conclusion}

\label{sec:conc}

We have proposed a method to solve the satisfiability checking problem of constrained Horn clauses over algebraic data types and integer arithmetic.
To find models defined inductively on the structure of algebraic data types, we employed
 catamorphisms to express inductive properties, and formalized a framework for automatically discovering appropriate catamorphisms on demand.
We also implemented a CHC solver \catalia{} based on the proposed method, and evaluated \catalia{} against the benchmark sets taken from CHC-COMP 2024 ADT-LIA division.
According to the evaluation results, \catalia{} outperformed the previous methods in solving SAT instances, indicating that \catalia{} is superior at discovering invariants that the previous solvers failed to find.

In future work, we plan to introduce more expressive catamorphism templates than linear ones.
As this may incur a cost in efficiency,
an important direction is to develop strategies for selectively applying different templates in the refinement phase of \catalia{}.

\subsubsection*{Acknowledgments}
    We would like to thank anonymous reviewers for useful comments.
    This work was supported by JSPS KAKENHI Grant Numbers JP23KJ0546 and JP20H05703.

\bibliographystyle{splncs04}
\bibliography{katsura}

\begin{thebibliography}{10}
\providecommand{\url}[1]{\texttt{#1}}
\providecommand{\urlprefix}{URL }
\providecommand{\doi}[1]{https://doi.org/#1}

\bibitem{AltBHS22}
Alt, L., Blicha, M., Hyv{\"{a}}rinen, A.E.J., Sharygina, N.: {SolCMC}:
  {Solidity} compiler's model checker. In: Shoham, S., Vizel, Y. (eds.)
  Computer Aided Verification - 34th International Conference, {CAV} 2022,
  Haifa, Israel, August 7-10, 2022, Proceedings, Part {I}. Lecture Notes in
  Computer Science, vol. 13371, pp. 325--338. Springer (2022).
  \doi{10.1007/978-3-031-13185-1\_16}

\bibitem{BarbosaBBKLMMMN22}
Barbosa, H., Barrett, C.W., Brain, M., Kremer, G., Lachnitt, H., Mann, M.,
  Mohamed, A., Mohamed, M., Niemetz, A., N{\"{o}}tzli, A., Ozdemir, A.,
  Preiner, M., Reynolds, A., Sheng, Y., Tinelli, C., Zohar, Y.: cvc5: {A}
  versatile and industrial-strength {SMT} solver. In: Fisman, D., Rosu, G.
  (eds.) Tools and Algorithms for the Construction and Analysis of Systems -
  28th International Conference, {TACAS} 2022, Held as Part of the European
  Joint Conferences on Theory and Practice of Software, {ETAPS} 2022, Munich,
  Germany, April 2-7, 2022, Proceedings, Part {I}. Lecture Notes in Computer
  Science, vol. 13243, pp. 415--442. Springer (2022).
  \doi{10.1007/978-3-030-99524-9\_24}

\bibitem{BarFT-RR-17}
Barrett, C., Fontaine, P., Tinelli, C.: {The SMT-LIB Standard: Version 2.6}.
  Tech. rep., Department of Computer Science, The University of Iowa (2017),
  available at {\tt www.SMT-LIB.org}

\bibitem{BlichaFHS22}
Blicha, M., Fedyukovich, G., Hyv{\"{a}}rinen, A.E.J., Sharygina, N.: Transition
  power abstractions for deep counterexample detection. In: Fisman, D., Rosu,
  G. (eds.) {TACAS} 2022, Part {I}. LNCS, vol. 13243, pp. 524--542. Springer
  (2022). \doi{10.1007/978-3-030-99524-9\_29}

\bibitem{ChampionCKS20}
Champion, A., Chiba, T., Kobayashi, N., Sato, R.: {ICE}-based refinement type
  discovery for higher-order functional programs. J. Autom. Reason.
  \textbf{64}(7),  1393--1418 (2020). \doi{10.1007/S10817-020-09571-Y}

\bibitem{ClarkeGJLV03}
Clarke, E.M., Grumberg, O., Jha, S., Lu, Y., Veith, H.: Counterexample-guided
  abstraction refinement for symbolic model checking. J. {ACM}  \textbf{50}(5),
   752--794 (2003). \doi{10.1145/876638.876643}

\bibitem{AngelisFPP18a}
{De Angelis}, E., Fioravanti, F., Pettorossi, A., Proietti, M.: Solving {Horn}
  clauses on inductive data types without induction. Theory Pract. Log.
  Program.  \textbf{18}(3-4),  452--469 (2018). \doi{10.1017/S1471068418000157}

\bibitem{AngelisFPP20}
{De Angelis}, E., Fioravanti, F., Pettorossi, A., Proietti, M.: Removing
  algebraic data types from constrained {Horn} clauses using difference
  predicates. In: Peltier, N., Sofronie{-}Stokkermans, V. (eds.) Automated
  Reasoning - 10th International Joint Conference, {IJCAR} 2020, Paris, France,
  July 1-4, 2020, Proceedings, Part {I}. Lecture Notes in Computer Science,
  vol. 12166, pp. 83--102. Springer (2020). \doi{10.1007/978-3-030-51074-9\_6}

\bibitem{AngelisFPP25}
{De Angelis}, E., Fioravanti, F., Pettorossi, A., Proietti, M.: Catamorphic
  abstractions for constrained {Horn} clause satisfiability. Theory Pract. Log.
  Program.  \textbf{25}(1),  64--91 (2025). \doi{10.1017/S147106842400019X}

\bibitem{chc-comp-2024}
Ernest, G., Morales, J.F.: {CHC COMP} 2024 report.
  \url{https://chc-comp.github.io/2024/} (2024)

\bibitem{chc-comp-2024-benchmark}
Ernest, G., Morales, J.F.: {CHC COMP} benchmarks.
  \url{https://github.com/chc-comp/chc-comp24-benchmarks/tree/main/ADT-LIA}
  (2024)

\bibitem{chc-comp-2021}
Fedyukovich, G., R{\"{u}}mmer, P.: Competition report: {CHC-COMP-21}. In:
  Hojjat, H., Kafle, B. (eds.) Proceedings 8th Workshop on Horn Clauses for
  Verification and Synthesis, HCVS@ETAPS 2021, Virtual, 28th March 2021.
  {EPTCS}, vol.~344, pp. 91--108 (2021). \doi{10.4204/EPTCS.344.7}

\bibitem{seahorn}
Gurfinkel, A., Kahsai, T., Komuravelli, A., Navas, J.A.: The {SeaHorn}
  verification framework. In: Kroening, D., Pasareanu, C.S. (eds.) Computer
  Aided Verification - 27th International Conference, {CAV} 2015, San
  Francisco, CA, USA, July 18-24, 2015, Proceedings, Part {I}. Lecture Notes in
  Computer Science, vol.~9206, pp. 343--361. Springer (2015).
  \doi{10.1007/978-3-319-21690-4\_20}

\bibitem{gpdr}
Hoder, K., Bj{\o}rner, N.: Generalized property directed reachability. In:
  Theory and Applications of Satisfiability Testing - {SAT} 2012 - 15th
  International Conference, Trento, Italy, June 17-20, 2012. Proceedings. pp.
  157--171 (2012). \doi{10.1007/978-3-642-31612-8\_13}

\bibitem{eldarica}
Hojjat, H., R{\"{u}}mmer, P.: The {ELDARICA} {Horn} solver. In: Proceedings of
  FMCAD 2018. pp.~1--7. {IEEE} (2018). \doi{10.23919/FMCAD.2018.8603013}

\bibitem{KSG22}
K., H.G.V., Shoham, S., Gurfinkel, A.: Solving constrained {Horn} clauses
  modulo algebraic data types and recursive functions. Proc. {ACM} Program.
  Lang.  \textbf{6}({POPL}),  1--29 (2022). \doi{10.1145/3498722}

\bibitem{jayhorn}
Kahsai, T., R{\"{u}}mmer, P., Sanchez, H., Sch{\"{a}}f, M.: Jayhorn: {A}
  framework for verifying {Java} programs. In: Chaudhuri, S., Farzan, A. (eds.)
  Computer Aided Verification - 28th International Conference, {CAV} 2016,
  Toronto, ON, Canada, July 17-23, 2016, Proceedings, Part {I}. Lecture Notes
  in Computer Science, vol.~9779, pp. 352--358. Springer (2016).
  \doi{10.1007/978-3-319-41528-4\_19}

\bibitem{KatsuraKSS24}
Katsura, H., Kobayashi, N., Sakayori, K., Sato, R.: Mode-based reduction from
  validity checking of fixpoint logic formulas to test-friendly reachability
  problem. In: Kiselyov, O. (ed.) Programming Languages and Systems - 22nd
  Asian Symposium, {APLAS} 2024, Kyoto, Japan, October 22-24, 2024,
  Proceedings. Lecture Notes in Computer Science, vol. 15194, pp. 325--345.
  Springer (2024). \doi{10.1007/978-981-97-8943-6\_16}

\bibitem{SAS25Artifact}
Katsura, H., Kobayashi, N., Sakayori, K., Sato, R.: Artifact: Automated
  catamorphism synthesis for solving constrained horn clauses over algebraic
  data types (2025). \doi{10.5281/zenodo.16220747}

\bibitem{KobayashiSSU21}
Kobayashi, N., Sekiyama, T., Sato, I., Unno, H.: Toward neural-network-guided
  program synthesis and verification. In: Dragoi, C., Mukherjee, S., Namjoshi,
  K.S. (eds.) Static Analysis - 28th International Symposium, {SAS} 2021,
  Chicago, IL, USA, October 17-19, 2021, Proceedings. Lecture Notes in Computer
  Science, vol. 12913, pp. 236--260. Springer (2021).
  \doi{10.1007/978-3-030-88806-0\_12}

\bibitem{KobayashiW23}
Kobayashi, N., Wu, M.: Neural network-guided synthesis of recursive list
  functions. In: Sankaranarayanan, S., Sharygina, N. (eds.) Tools and
  Algorithms for the Construction and Analysis of Systems - 29th International
  Conference, {TACAS} 2023, Held as Part of the European Joint Conferences on
  Theory and Practice of Software, {ETAPS} 2022, Paris, France, April 22-27,
  2023, Proceedings, Part {I}. Lecture Notes in Computer Science, vol. 13993,
  pp. 227--245. Springer (2023). \doi{10.1007/978-3-031-30823-9\_12}

\bibitem{spacer}
Komuravelli, A., Gurfinkel, A., Chaki, S.: {SMT}-based model checking for
  recursive programs. Formal Methods in System Design  \textbf{48}(3),
  175--205 (2016). \doi{10.1007/s10703-016-0249-4}

\bibitem{KostyukovMF21}
Kostyukov, Y., Mordvinov, D., Fedyukovich, G.: Beyond the elementary
  representations of program invariants over algebraic data types. In: Freund,
  S.N., Yahav, E. (eds.) {PLDI} '21: 42nd {ACM} {SIGPLAN} International
  Conference on Programming Language Design and Implementation, Virtual Event,
  Canada, June 20-25, 2021. pp. 451--465. {ACM} (2021).
  \doi{10.1145/3453483.3454055}

\bibitem{Leino12}
Leino, K.R.M.: Automating induction with an {SMT} solver. In: Kuncak, V.,
  Rybalchenko, A. (eds.) Verification, Model Checking, and Abstract
  Interpretation - 13th International Conference, {VMCAI} 2012, Philadelphia,
  PA, USA, January 22-24, 2012. Proceedings. Lecture Notes in Computer Science,
  vol.~7148, pp. 315--331. Springer (2012). \doi{10.1007/978-3-642-27940-9\_21}

\bibitem{rusthorn}
Matsushita, Y., Tsukada, T., Kobayashi, N.: {RustHorn}: {CHC}-based
  verification for rust programs. In: M{\"{u}}ller, P. (ed.) Programming
  Languages and Systems - 29th European Symposium on Programming, {ESOP} 2020,
  Held as Part of the European Joint Conferences on Theory and Practice of
  Software, {ETAPS} 2020, Dublin, Ireland, April 25-30, 2020, Proceedings.
  Lecture Notes in Computer Science, vol. 12075, pp. 484--514. Springer (2020).
  \doi{10.1007/978-3-030-44914-8\_18}

\bibitem{MeijerFP91}
Meijer, E., Fokkinga, M.M., Paterson, R.: Functional programming with bananas,
  lenses, envelopes and barbed wire. In: Hughes, J. (ed.) Functional
  Programming Languages and Computer Architecture, 5th {ACM} Conference,
  Cambridge, MA, USA, August 26-30, 1991, Proceedings. Lecture Notes in
  Computer Science, vol.~523, pp. 124--144. Springer (1991).
  \doi{10.1007/3540543961\_7}

\bibitem{MouraB08}
de~Moura, L.M., Bj{\o}rner, N.S.: {Z3:} an efficient {SMT} solver. In:
  Ramakrishnan, C.R., Rehof, J. (eds.) Tools and Algorithms for the
  Construction and Analysis of Systems, 14th International Conference, {TACAS}
  2008, Held as Part of the Joint European Conferences on Theory and Practice
  of Software, {ETAPS} 2008, Budapest, Hungary, March 29-April 6, 2008.
  Proceedings. Lecture Notes in Computer Science, vol.~4963, pp. 337--340.
  Springer (2008). \doi{10.1007/978-3-540-78800-3\_24}

\bibitem{MukaiKS22}
Mukai, R., Kobayashi, N., Sato, R.: Parameterized recursive refinement types
  for automated program verification. In: Singh, G., Urban, C. (eds.) Static
  Analysis - 29th International Symposium, {SAS} 2022, Auckland, New Zealand,
  December 5-7, 2022, Proceedings. Lecture Notes in Computer Science, vol.
  13790, pp. 397--421. Springer (2022). \doi{10.1007/978-3-031-22308-2\_18}

\bibitem{Oppen80}
Oppen, D.C.: Reasoning about recursively defined data structures. J. {ACM}
  \textbf{27}(3),  403--411 (1980). \doi{10.1145/322203.322204}

\bibitem{PhamGW16}
Pham, T., Gacek, A., Whalen, M.W.: Reasoning about algebraic data types with
  abstractions. J. Autom. Reason.  \textbf{57}(4),  281--318 (2016).
  \doi{10.1007/S10817-016-9368-2}

\bibitem{reynolds2015counterexample}
Reynolds, A., Deters, M., Kuncak, V., Tinelli, C., Barrett, C.:
  Counterexample-guided quantifier instantiation for synthesis in {SMT}. In:
  Computer Aided Verification: 27th International Conference, CAV 2015, San
  Francisco, CA, USA, July 18-24, 2015, Proceedings, Part II 27. pp. 198--216.
  Springer (2015)

\bibitem{ReynoldsK15}
Reynolds, A., Kuncak, V.: Induction for {SMT} solvers. In: D'Souza, D., Lal,
  A., Larsen, K.G. (eds.) Verification, Model Checking, and Abstract
  Interpretation - 16th International Conference, {VMCAI} 2015, Mumbai, India,
  January 12-14, 2015. Proceedings. Lecture Notes in Computer Science,
  vol.~8931, pp. 80--98. Springer (2015). \doi{10.1007/978-3-662-46081-8\_5}

\bibitem{princess08}
R{\"u}mmer, P.: A constraint sequent calculus for first-order logic with linear
  integer arithmetic. In: Proceedings, 15th International Conference on Logic
  for Programming, Artificial Intelligence and Reasoning. LNCS, vol.~5330, pp.
  274--289. Springer (2008)

\bibitem{SuterDK10}
Suter, P., Dotta, M., Kuncak, V.: Decision procedures for algebraic data types
  with abstractions. In: Hermenegildo, M.V., Palsberg, J. (eds.) Proceedings of
  the 37th {ACM} {SIGPLAN-SIGACT} Symposium on Principles of Programming
  Languages, {POPL} 2010, Madrid, Spain, January 17-23, 2010. pp. 199--210.
  {ACM} (2010). \doi{10.1145/1706299.1706325}

\bibitem{SuterKK11}
Suter, P., K{\"{o}}ksal, A.S., Kuncak, V.: Satisfiability modulo recursive
  programs. In: Yahav, E. (ed.) Static Analysis - 18th International Symposium,
  {SAS} 2011, Venice, Italy, September 14-16, 2011. Proceedings. Lecture Notes
  in Computer Science, vol.~6887, pp. 298--315. Springer (2011).
  \doi{10.1007/978-3-642-23702-7\_23}

\bibitem{yang2019lemma}
Yang, W., Fedyukovich, G., Gupta, A.: Lemma synthesis for automating induction
  over algebraic data types. In: Principles and Practice of Constraint
  Programming: 25th International Conference, CP 2019, Stamford, CT, USA,
  September 30--October 4, 2019, Proceedings 25. pp. 600--617. Springer (2019)

\bibitem{ZavaliaCF23}
Zaval{\'{\i}}a, L., Chernigovskaia, L., Fedyukovich, G.: Solving constrained
  {Horn} clauses over algebraic data types. In: Dragoi, C., Emmi, M., Wang, J.
  (eds.) Verification, Model Checking, and Abstract Interpretation - 24th
  International Conference, {VMCAI} 2023, Boston, MA, USA, January 16-17, 2023,
  Proceedings. Lecture Notes in Computer Science, vol. 13881, pp. 341--365.
  Springer (2023). \doi{10.1007/978-3-031-24950-1\_16}

\bibitem{ZhangSM06}
Zhang, T., Sipma, H.B., Manna, Z.: Decision procedures for term algebras with
  integer constraints. Inf. Comput.  \textbf{204}(10),  1526--1574 (2006).
  \doi{10.1016/J.IC.2006.03.004}

\end{thebibliography}
\appendix
\iflong
\section{Proof of Soundness}
\label{appx:soundness}
\newcommand*{\defeq}{\stackrel{\mathrm {def}}{=}}
\newcommand*{\cmap}{\bm{\gamma}}
\newcommand*{\absPred}{\AVAR{\PREDSYM}}
\newcommand*{\valuation}{\rho}
\newcommand*{\valuationTwo}{\zeta}
\renewcommand*{\term}{\dterm} %
\newcommand*{\cmodel}{\cmap(\model)}
\newcommand*{\valRel}[1][\cata, \CENV]{\approx_{#1}}

This section proves the soundness of the abstraction (Theorem~\ref{th:cata-soundness}).

\paragraph{Setting and Notation:}
Throughout this section we fix an approximation degree \( \NAPPROX \) and a catamorphism \( \cata \colon \herbrand_\dsort \to \Int^\NAPPROX \).
We also fix \( \system = \{ \clause_1, \ldots \clause_n \} \), a system of CHC over ADTs, and assume that \( \model \) is a model of \( \system \).
We suppose that for each predicate \( \PREDSYM \colon \overbrace{\dsort \times \cdots \times \dsort}^{k} \times \overbrace{\stypeint \times \to \times \stypeint}^l \to \Prop \) in the signature of ADT, there exists a predicate \( \absPred \) that takes \( \NAPPROX \times k + l \) integers in the signature of arithmetic.

Given a structure \( \mathcal{S} \), we write \( \sem{P}_{\mathcal S}\) and \( \sem{f}_{\mathcal S} \) for the interpretation of the predicate \( P \) and the function symbol \( f \) in \( \mathcal S \).
The denotation of a term and formula with respect to \( \mathcal{S} \) and a valuation \( \valuation \) are written as \( \sem{\dterm}_{\mathcal S, \valuation} \) and \( \sem{\varphi}_{\mathcal S, \valuation} (\in \{ \top, \bot \}) \), respectively; we may omit the subscripts if they are clear from the context.
As usual, we write  \( \mathcal{S}, \valuation \models \sem{\varphi} \) if \( \sem{\varphi}_{\mathcal S, \valuation} = \top \), and \( \mathcal{S} \models \varphi \) if \( \mathcal{S}, \valuation \models \varphi \) for all \( \valuation \).
\qed
\par
\medskip
The proof of soundness is given by constructing a model of \( \system \) from \( \model \), the model of \( \cataA(\system)\).
For a function (which may be thought of as an interpretation of an ``abstracted predicate'')
\[
f \colon
    \overbrace{\Int^\NAPPROX \times \dots \times \Int^\NAPPROX}^{k} \times \overbrace{\Int \times \dots \times \Int}^{l} \to \{\top, \bot\},
\]
we define its \emph{concretization} with respect to \( \cata \) by
\begin{gather*}
  \cmap_\cata(f) \colon \overbrace{\herbrand_\dsort \times \dots \times \herbrand_\dsort}^k \times \overbrace{\Int \times \dots \times \Int}^l \to \{ \top, \bot \} \\
  \cmap_\cata(f)(\dterm_1, \ldots, \dterm_k, n_1, \ldots, n_l) \defeq f(\cata(\dterm_1), \ldots, \cata(\dterm_k), n_1, \ldots, n_l)
\end{gather*}
We often omit the subscript \( \cata \) since it is fixed throughout this section.
Our goal is to show that concretizing the interpretation of \( \absPred \) gives a model of \( \system \).

\begin{definition}[Concretization of a Model]
    We write \( \cmodel \) for the structure over the signature for ADTs in which
  \begin{itemize}
    \item the universes for \( \dsort \) and \( \stypeint \) are \( \herbrand_\dsort \) and \( \Int \), respectively,
    \item the interpretation of \( \PREDSYM \) is given as \( \cmap(\sem{\absPred}_\model)\),
    \item the interpretation of \( =_\dsort \) is given as  the diagonal relation over \( \herbrand_\dsort \), and
    \item interpretation of function symbols is the same as that of \( \model \).
   \end{itemize}
 \end{definition}

Now we relate the interpretation of terms and formulas interpreted in \( \model \) and \( \cmap(\model )\).
To this end, we first prepare a relation over the valuations on \( \model \) and \( \cmodel \).
Let \( \CENV \) be a variable abstraction environment.
Given valuations \( \valuation \) and \( \valuationTwo \) over the (many-sorted) universes \( \langle \herbrand_\delta, \Int \rangle\) and \( \Int \), respectively, we write \( \valuation \valRel \valuationTwo \) if \( \cata(\valuation(x)) = \valuationTwo(\CENV(x)) \) for all \( x \in \dom(\CENV) \) and \( \valuation(x) = \valuationTwo(x) \) otherwise; here \( \valuationTwo(\CENV(x)) \) means \( (\valuationTwo(x_1), \ldots, \valuationTwo(x_n) ) \) for \( \CENV(x) = (x_1, \ldots, x_n) \).
\begin{lemma}
  \label{lem:soundness-term}
  Let \( \CENV \) be a variable abstraction environment and let \( \dterm \) be a term of sort \( \dsort \) such that \( \fv(\dterm) \subseteq \dom(\CENV) \).
  If \( \valuation \valRel \valuationTwo \), then we have \(  \cata(\sem{\dterm}_{\valuation}) = \sem{\cataAT(\dterm)}_\valuationTwo  \).
\end{lemma}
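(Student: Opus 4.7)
The plan is to proceed by structural induction on the term $\dterm$. Since $\dterm$ has sort $\dsort$, the grammar leaves only two possibilities: $\dterm$ is a variable, or $\dterm$ is a constructor application $\CONS_i(\arith_1,\dots,\arith_{m_i},\dterm_1,\dots,\dterm_{n_i})$. (A term of the shape $\dterm_1\operator\dterm_2$ has sort $\stypeint$, so it cannot appear at the top level.)

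For the base case $\dterm=x$, the assumption $\fv(\dterm)\subseteq \dom(\CENV)$ together with the sort constraint forces $x\in\dom(\CENV)$. Then $\cataAT(x)=\CENV(x)$, so
$$\sem{\cataAT(x)}_\valuationTwo \;=\; \valuationTwo(\CENV(x)) \;=\; \cata(\valuation(x)) \;=\; \cata(\sem{x}_\valuation),$$
where the middle equality is exactly the first clause of the definition of $\valRel$. For the inductive case, I would unfold the two sides. On the left, the defining equation of the catamorphism yields
$$\cata(\sem{\dterm}_\valuation)=\combine_i\bigl(\sem{\arith_1}_\valuation,\dots,\sem{\arith_{m_i}}_\valuation,\cata(\sem{\dterm_1}_\valuation),\dots,\cata(\sem{\dterm_{n_i}}_\valuation)\bigr).$$
On the right, the definition of $\cataAT$ in Section~\ref{sec:catalia-abstraction} yields
$$\sem{\cataAT(\dterm)}_\valuationTwo=\combine_i\bigl(\sem{\arith_1}_\valuationTwo,\dots,\sem{\arith_{m_i}}_\valuationTwo,\sem{\cataAT(\dterm_1)}_\valuationTwo,\dots,\sem{\cataAT(\dterm_{n_i})}_\valuationTwo\bigr).$$
The ADT-sorted arguments agree by the induction hypothesis applied to each $\dterm_j$.

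For the integer-sorted arguments, I would establish a short side lemma: for any arithmetic term $\arith$ with $\fv(\arith)\subseteq\dom(\CENV)$, $\sem{\arith}_\valuation=\sem{\arith}_\valuationTwo$. This is proved by induction on $\arith$; the only non-trivial case is a variable $x$, which, being of sort $\stypeint$, falls outside $\dom(\CENV)$ (since $\CENV$ is introduced only for ADT variables in the abstraction of a clause), so the second clause of the definition of $\valRel$ gives $\valuation(x)=\valuationTwo(x)$. Combining this with the IH on the ADT subterms makes both sides coincide argument-by-argument inside $\combine_i$.

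There is no genuine obstacle here; the proof is a direct structural induction. The only point to be careful about is the bookkeeping around $\CENV$: one must observe that $\CENV$ is defined only on ADT variables, so integer variables are treated by the ``otherwise'' clause of $\valRel$, which is what makes the auxiliary agreement $\sem{\arith}_\valuation=\sem{\arith}_\valuationTwo$ go through. Once this convention is recorded, the inductive step reduces to rewriting with the catamorphism equation and the definition of $\cataAT$.
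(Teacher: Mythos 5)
Your proof is correct and follows essentially the same route as the paper's: structural induction on \( \dterm \), with the variable case settled by the first clause of \( \valRel \) and the constructor case by unfolding \( \cata \) and \( \cataAT \) and applying the induction hypothesis. The only difference is that you make explicit the agreement \( \sem{\arith}_\valuation = \sem{\arith}_\valuationTwo \) on integer subterms (via the ``otherwise'' clause of \( \valRel \)), which the paper absorbs silently into its ``by I.H.''\ step; this is a harmless and slightly more careful presentation of the same argument.
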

\begin{proof}
  By induction on the structure of \( \dterm \).
  The case \( \dterm = x \) with \( x \in \dom(\CENV) \) holds because
  \begin{align*}
    \cata(\sem{x}_{\valuation}) = \cata(\valuation(x)) = \valuationTwo(\CENV(x)) = \sem{\cataAT(x)}_\valuationTwo
  \end{align*}
  by the definition of \( \valuation \valRel \valuationTwo \).

  We next show the case \( \dterm = \CONS_i(\dterm_1, \dots, \dterm_{m_i}, \arith_1, \dots, \arith_{n_i}) \).
  We have
      \begin{align*}
        &\sem{\cataAT(\CONS_i(\dterm_1, \dots, \dterm_{m_i}, \arith_1, \dots, \arith_{n_i}))}_\valuationTwo \\
        &= \combine_i(\sem{\cataAT(\dterm_1)}_\valuationTwo, \dots, \sem{\cataAT(\dterm_{m_i})}_\valuationTwo, \sem{\arith_1}_\valuationTwo, \dots, \sem{\arith_{m_i}}_\valuationTwo) \\
        &= \combine_i(\cata(\sem{\dterm_1}_\valuation), \dots, \cata(\sem{\dterm_{m_i}}_\valuation), \sem{\arith_1}_\valuation, \dots, \sem{\arith_{m_i}}_\valuation) \tag{by I.H.} \\
        &= \cata(\CONS_i(\sem{\dterm_1}_\valuation, \dots, \sem{\dterm_{m_i}}_\valuation, \sem{\arith_1}_\valuation, \dots, \sem{\arith_{m_i}}_\valuation))\tag{by def.\ of \( \cata\)} \\
        &= \cata(\sem{\dterm}_\valuation) \tag{by def.~of \( \sem{\dterm} \)}
      \end{align*}
   as desired.

  The other cases are trivial.
  \qed
\end{proof}
\begin{lemma}
  \label{lem:soundness-pred}
  Suppose that \( \valuation \valRel \valuationTwo \).
  Then we have:
  \begin{enumerate}
    \item \( \cmodel, \valuation \models \PREDSYM_i(\seq{\dterm})\) if and only if \( \model, \valuationTwo \models \absPred_i(\seq{\cataAT(\dterm)}) \)
    \item if \( \cmodel, \valuation  \models  t_1 =_{\dsort} t_2 \), then \(  \model, \valuationTwo \models \cataAT(\dterm_1) =_{\stypeint^{\NAPPROX}} \cataAT(\dterm_2) \)
  \end{enumerate}
\end{lemma}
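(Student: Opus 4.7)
The plan is to reduce both claims to \emph{Lemma~\ref{lem:soundness-term}}, which already handles the semantic effect of $\cataAT$ on terms of sort $\dsort$. The key observation is that, by the construction of $\cmodel$, the interpretation of every ADT predicate $\PREDSYM_i$ is defined as $\cmap_\cata(\sem{\absPred_i}_\model)$, so applying $\PREDSYM_i$ at a tuple of concrete values is, by definition, the same as applying $\absPred_i$ at the corresponding $\cata$-images of those values.

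First I would unfold the semantic definition of (1). Writing $\seq{\dterm} = (\dterm_1,\dots,\dterm_{k+l})$ where the first $k$ arguments have sort $\dsort$ and the remaining $l$ have sort $\stypeint$, we get
\[
\cmodel,\valuation \models \PREDSYM_i(\seq{\dterm})
\iff
\sem{\absPred_i}_\model\bigl(\cata(\sem{\dterm_1}_\valuation),\dots,\cata(\sem{\dterm_k}_\valuation),\sem{\dterm_{k+1}}_\valuation,\dots,\sem{\dterm_{k+l}}_\valuation\bigr).
\]
By Lemma~\ref{lem:soundness-term}, each $\cata(\sem{\dterm_j}_\valuation)$ with $\dterm_j$ of sort $\dsort$ equals $\sem{\cataAT(\dterm_j)}_\valuationTwo$. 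For the arithmetic arguments, because arithmetic terms contain no ADT subterms (selectors have been eliminated by preprocessing) and because $\valuation$ and $\valuationTwo$ agree on integer variables outside $\dom(\CENV)$ by the definition of $\valRel$, an easy term induction gives $\sem{\dterm_j}_\valuation = \sem{\dterm_j}_\valuationTwo = \sem{\cataAT(\dterm_j)}_\valuationTwo$ for each such $\dterm_j$. Collecting these equalities turns the right-hand side into $\model,\valuationTwo \models \absPred_i(\seq{\cataAT(\dterm)})$, establishing both directions of the biconditional simultaneously.

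For (2), I would unfold the semantics of $=_\dsort$ in $\cmodel$, whose interpretation is the diagonal of $\herbrand_\dsort$: $\cmodel,\valuation \models \dterm_1 =_\dsort \dterm_2$ means $\sem{\dterm_1}_\valuation = \sem{\dterm_2}_\valuation$. Applying $\cata$ to both sides and invoking Lemma~\ref{lem:soundness-term} yields $\sem{\cataAT(\dterm_1)}_\valuationTwo = \sem{\cataAT(\dterm_2)}_\valuationTwo$ as $\NAPPROX$-tuples of integers, which is exactly $\model,\valuationTwo \models \cataAT(\dterm_1) =_{\stypeint^\NAPPROX} \cataAT(\dterm_2)$. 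Only the forward direction is claimed, and this is essential: the reverse would demand injectivity of $\cata$, which is precisely what the abstraction deliberately forgoes.

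The hard part, as such, is not any single calculation but rather the bookkeeping around mixed-sort predicate arguments and around the two populations of variables (those in $\dom(\CENV)$ versus those outside). I would therefore state the auxiliary claim that for any arithmetic term $\arith$, $\sem{\arith}_\valuation = \sem{\cataAT(\arith)}_\valuationTwo$ whenever $\valuation \valRel \valuationTwo$, and prove it once by induction on $\arith$; this cleanly isolates the only subtlety and reduces the main argument to a direct substitution into the definition of $\cmap_\cata$.
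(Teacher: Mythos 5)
Your proposal is correct and follows essentially the same route as the paper: unfold the definition of \( \cmap \) in \( \cmodel \), apply Lemma~\ref{lem:soundness-term} to the \( \dsort \)-sorted arguments, and observe that \( \cataAT \) is the identity on arithmetic terms (where \( \valuation \) and \( \valuationTwo \) agree); the paper dismisses part~2 as trivial while you spell out the same diagonal-relation argument. No gaps.
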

\begin{proof}
  We only prove 1.\ since 2.\ is trivial.
  Suppose that \( \seq \dterm = \dterm_1, \ldots, \dterm_k, \allowbreak \arith_1, \ldots, \arith_l \), where each \( \dterm_i \) has sort \( \dsort \) and \( \arith_i \) is of sort \( \stypeint \).
  Then we have
  \begin{align*}
    &\sem{\PREDSYM_i(\seq{\dterm})}_{\cmodel, \valuation} \\
    &= \cmap(\sem{\PREDSYM_i}_{\model})(\sem{\dterm_1}_{\valuation}, \ldots, \sem{\dterm_k}_{\valuation},   \sem{\arith_1}_\valuation, \ldots, \sem{\arith_k}) \\
    &= \sem{\PREDSYM_i}_{\model}(\cata(\sem{\dterm_1}_{\valuation}), \ldots, \cata(\sem{\dterm_k}_{\valuation}),   \sem{\arith_1}_\valuation, \ldots, \sem{\arith_k}) \tag{by the def.\ of \( \cmap \)} \\
    &=\sem{\PREDSYM_i}_{\model}\sem{\cataAT(\dterm_1)}_{\valuationTwo}), \ldots, \sem{\cataAT(\dterm_k)}_{\valuationTwo}),   \sem{\cataAT(\arith_1)}_\valuationTwo, \ldots, \sem{\cataAT(\arith_k)}_\valuationTwo) \tag{by Lemma\ref{lem:soundness-term} and \( \cataAT(\arith_i) = \arith_i\)} \\
    &= \sem{\absPred_i(\seq{\cataAT(\dterm)})}_{\model, \valuationTwo}
  \end{align*}
  \qed
\end{proof}

\begin{lemma}
  \label{lem:soundness-constraint}
  Let \( \CENV \) be a variable abstraction environment and assume that \( \valuation \valRel \valuationTwo \).
  Then \( \cmodel, \valuationTwo \models \constraint \) implies \( \model, \valuation \models \cataAFB(\constraint) \).
\end{lemma}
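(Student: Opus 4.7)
The plan is to prove the lemma by structural induction on the constraint formula \(\constraint\). Because the constraint formulas appearing in the body of a CHC are quantifier-free, the cases to handle are the atomic formulas (\(\true\), \(\false\), predicate applications, ADT equalities, and built-in arithmetic atoms) together with the Boolean connectives \(\land\) and \(\lor\). The induction reduces the semantic claim about \(\constraint\) to corresponding claims about its atomic subformulas, which are precisely what Lemmas~\ref{lem:soundness-term} and~\ref{lem:soundness-pred} provide.

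I would first dispatch the base cases. The cases \(\constraint = \true\) and \(\constraint = \false\) are immediate since \(\cataAFB(\true) = \true\) and \(\cataAFB(\false) = \false\). For a predicate atom \(\PREDSYM_i(\seq{\dterm})\), the implication required by the stated lemma follows from the biconditional in Lemma~\ref{lem:soundness-pred}(1), instantiated with the pair \((\valuation, \valuationTwo)\) satisfying \(\valuation \valRel \valuationTwo\). For the ADT equality case \(\dterm_1 =_{\dsort} \dterm_2\), I would invoke Lemma~\ref{lem:soundness-pred}(2). For a built-in arithmetic atom \(\predicate(\arith_1,\dots,\arith_l)\), we have \(\cataAFB(\predicate(\seq{\arith})) = \predicate(\seq{\arith})\) and \(\cataAT(\arith_i) = \arith_i\) by definition; the result then follows from the fact that \(\valuation\) and \(\valuationTwo\) agree on every integer-sorted variable outside \(\dom(\CENV)\) (by the definition of \(\valRel\)) and that built-in arithmetic predicates are interpreted identically in \(\cmodel\) and \(\model\). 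This last case is effectively Lemma~\ref{lem:soundness-term} specialised to integer terms.

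The inductive cases are mechanical. For \(\constraint = \constraint_1 \land \constraint_2\), I would split the semantic hypothesis on \(\constraint\) into the two conjuncts, apply the induction hypothesis to each, and recombine using the clause \(\cataAFB(\constraint_1 \land \constraint_2) = \cataAFB(\constraint_1) \land \cataAFB(\constraint_2)\) from the definition of \(\cataAFB\). The disjunctive case is analogous: pick the disjunct that witnesses the hypothesis, apply the induction hypothesis to it, and conclude the corresponding disjunct of \(\cataAFB(\constraint_1 \lor \constraint_2)\).

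The only real obstacle I foresee is notational bookkeeping. The relation \(\valuation \valRel \valuationTwo\) is asymmetric: \(\valuation\) ranges over the many-sorted universe \(\langle \herbrand_\dsort, \Int \rangle\) while \(\valuationTwo\) ranges only over \(\Int\); likewise \(\cmodel\) and \(\model\) interpret signatures on different sorts. So one must be careful that each application of the atomic lemmas matches structure to valuation consistently with the pairing fixed by \(\valRel\), and that the implication obtained from Lemma~\ref{lem:soundness-pred} is used in the direction required by the statement at hand. Once that bookkeeping is handled, the induction contributes no new content beyond the routine Boolean reasoning.
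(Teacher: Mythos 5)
Your proposal is correct and follows essentially the same route as the paper, whose proof is simply ``by straightforward induction on the structure of \(\constraint\), with the case \(\dterm_1 =_{\dsort} \dterm_2\) following from Lemma~\ref{lem:soundness-pred}''; your elaboration of the atomic and Boolean cases fills in exactly the steps the paper leaves implicit. Your remark about matching each structure to the right valuation is apt, since the statement as printed pairs \(\cmodel\) with \(\valuationTwo\) and \(\model\) with \(\valuation\), which is the opposite of how the relation \(\valRel\) and the concluding theorem use them.
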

\begin{proof}
  By straightforward induction on the structure of \( \constraint \). The case for \( \dterm_1 =_{\dsort} \dterm_2 \) follows from Lemma~\ref{lem:soundness-pred}.
  \qed
\end{proof}

We show that \( \absPred_\dsort(x_1, \ldots, x_\NAPPROX) \), the abstraction of \( \PREDSYM_\dsort(x) \) represents the image of \( \cata \).
\begin{lemma}
  \label{lem:admissibility-appx}
  \newcommand*{\minModel}{\model_{\mathrm{min}}}
  \newcommand*{\cataSort}{\PREDSYM_\dsort}
  \newcommand*{\cataSorts}{\seq{\PREDSYM}_\dsort}
  Let \( \cataSorts \) be the system of CHCs defining \( \cataSort \).
  Suppose that  \( \minModel \) be the minimum model of \( \cataA(\cataSorts) \)
  Then, \( \sem{\absPred_\dsort}_{\minModel}(n_1, \dots, n_\NAPPROX) = \top \)  if and only if there exists \( t \in \universe{\herbrand}_{\dsort} \) such that \( (n_1, \dots, n_\NAPPROX) = \cata(t) \).
\end{lemma}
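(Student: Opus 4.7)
The plan is to prove both directions by exploiting the standard characterization of the minimum model of a Horn system as the least fixed point of its immediate-consequence operator, or equivalently, as the set of atoms admitting a finite derivation tree whose leaves are instances of clause bodies with empty premises. Note that the clauses in $\cataA(\cataSorts)$ that mention $\absPred_\dsort$ in the head are exactly the abstractions of the clauses in $\cataSorts$, and for each constructor $\CONS_i$ these have the shape
\[
\absPred_\dsort(\combine_i(x_1,\dots,x_{m_i},\seq{y}_1,\dots,\seq{y}_{n_i})) \Leftarrow \absPred_\dsort(\seq{y}_1)\land\cdots\land\absPred_\dsort(\seq{y}_{n_i}),
\]
where each $\seq{y}_j$ is an $\NAPPROX$-tuple of fresh integer variables. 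This observation is what drives the correspondence.

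For the $(\Leftarrow)$ direction, I would proceed by structural induction on $t\in\herbrand_\dsort$. Write $t = \CONS_i(a_1,\dots,a_{m_i},t_1,\dots,t_{n_i})$. The inductive hypothesis gives $\sem{\absPred_\dsort}_{\minModel}(\cata(t_j))=\top$ for each $j$. Instantiating the abstracted $\CONS_i$-clause above with $x_k\mapsto a_k$ and $\seq{y}_j\mapsto\cata(t_j)$, validity in $\minModel$ yields $\absPred_\dsort(\combine_i(a_1,\dots,a_{m_i},\cata(t_1),\dots,\cata(t_{n_i})))$, which by the defining equation of $\cata$ is exactly $\absPred_\dsort(\cata(t))$.

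For the $(\Rightarrow)$ direction, since $\sem{\absPred_\dsort}_{\minModel}(\seq{n}) = \top$, there is a finite derivation tree witnessing $\absPred_\dsort(\seq{n})$ using the clauses above. I would induct on the height of this derivation. At the root a clause for some $\CONS_i$ is applied: there exist $a_1,\dots,a_{m_i}\in\Int$ and tuples $\seq{y}_1,\dots,\seq{y}_{n_i}\in\Int^\NAPPROX$ with $\seq{n} = \combine_i(a_1,\dots,a_{m_i},\seq{y}_1,\dots,\seq{y}_{n_i})$ and strictly shorter derivations of each $\absPred_\dsort(\seq{y}_j)$. By the IH, there exist $t_j\in\herbrand_\dsort$ with $\cata(t_j)=\seq{y}_j$. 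Taking $t = \CONS_i(a_1,\dots,a_{m_i},t_1,\dots,t_{n_i})\in\herbrand_\dsort$, the defining equation of $\cata$ gives $\cata(t) = \combine_i(a_1,\dots,a_{m_i},\cata(t_1),\dots,\cata(t_{n_i})) = \seq{n}$.

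The main obstacle is largely bookkeeping rather than conceptual: one has to verify that no other clause of $\cataA(\system)$ contributes atoms of the shape $\absPred_\dsort(\cdot)$ to $\minModel$, so that the induction on derivations only encounters clauses of the form displayed above, giving a bijective-in-shape correspondence between derivations and Herbrand terms. This follows directly from the syntactic restriction of $\cataSorts$ to constructor-defining clauses, so no complication arises; the rest of the argument is a routine induction driven by the catamorphism's defining equation.
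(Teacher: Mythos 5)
Your proof is correct, and it diverges from the paper's in one of its two halves. For the ``image of \( \cata \) implies membership in the minimum model'' direction, you and the paper do the same thing: a structural induction on \( t \) using the abstracted constructor clauses (the paper phrases it as ``every model of \( \cataA(\seq{\PREDSYM}_\dsort) \) contains \( \cata(t) \) for all \( t \)'', which specializes to the minimum model). For the converse direction, the paper takes the dual route: it explicitly defines the interpretation ``\( \top \) exactly on the image of \( \cata \)'', checks that this is a model of \( \cataA(\seq{\PREDSYM}_\dsort) \), and concludes by minimality that the minimum model is contained in it. You instead invoke the least-fixed-point/derivation-tree characterization of the minimum model of a definite Horn system and induct on derivation height, reconstructing a witness term \( t = \CONS_i(a_1,\dots,a_{m_i},t_1,\dots,t_{n_i}) \) from the clause applied at the root. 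Both arguments are standard and sound; yours avoids having to verify separately that the image set is closed under the clauses (the paper's ``one can easily check'' step), at the cost of assuming the derivation-tree characterization of least models relative to the background integer theory, which the paper does not need to state. Your closing remark about no other clauses contributing \( \AVAR{\PREDSYM}_\dsort \)-atoms is immediate here since the lemma concerns the minimum model of \( \cataA(\seq{\PREDSYM}_\dsort) \) alone, not of the whole abstracted system, so nothing further needs to be checked.
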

\begin{proof}
  \newcommand*{\minModel}{\model_{\mathrm{min}}}
  \newcommand*{\cataSort}{\PREDSYM_\dsort}
  \newcommand*{\cataSorts}{\seq{\PREDSYM}_\dsort}
  Observe that for each constructor \( \CONS_i \), we have the following clause in \( \cataA(\cataSorts) \).
  \begin{align*}
    \forall \seq{y_1} \ldots \seq{y_k} \:\seq{x}. \absPred_\dsort(\combine_i(\seq{y_1}, \ldots, \seq{y_k}, \seq{x})) \Longleftarrow \absPred_\dsort(\seq{y_1}) \land \cdots
    \land \absPred_\dsort(\seq{y_k})
  \end{align*}
  where \( \seq{y_i} \) is a sequence of integer variables \( y_{i1}, \ldots, y_{i\NAPPROX} \).

  One can easily check that the structure \( \minModel \) defined by
  \begin{align*}
    \sem{\AVAR{\cataSort}}_{\minModel}(n_1, \ldots, n_\NAPPROX) =
    \begin{cases}
      \top \quad \text{if \( (n_1, \ldots, n_\NAPPROX) = \cata(\dterm) \) for some \( \dterm \)}\\
      \bot \quad \text{otherwise}
    \end{cases}
  \end{align*}
  is a model by checking the denotation of the each clause of \( \cataA(\cataSorts) \).

  To check that this is indeed a minimum model, it suffices to show that for every model \( \model \) of \( \cataA(\cataSorts) \) and every term \( \dterm \in \herbrand_\dsort\), we have \( \sem{\AVAR{\cataSort}}_{\model}(\cata(\dterm)) = \top \).
  This is proved by induction on the size of \( \dterm \) with a case analysis on the shape of \( \dterm \).
  Suppose that \( \dterm = \CONS_i(\dterm_1, \ldots, \dterm_k, \arith_1, \ldots, \arith_l)\).
  Since \( \model \) is a model of \( \cataA(\cataSorts) \), we have
  \begin{align*}
    \model, \valuation \models \AVAR{\cataSort}( \combine_i(\seq{y_1}, \dots, \seq{y_k}, \seq{x})) \Longleftarrow \AVAR{\cataSort}(\seq{y_1}) \land \dots \land \AVAR{\cataSort}(\seq{y_k})
  \end{align*}
  where \( \valuation = [\seq{y_1} \mapsto \cata(\dterm_1), \ldots \seq{y_k} \mapsto \cata(\dterm_k), \seq{x} \mapsto \seq{\sem{\arith}_{\model}}]\).
  We have \( \sem{\AVAR{\cataSort}}_{\model}(\cata(\dterm_i))  \allowbreak = \top \) for \( i \in \{ 1, \ldots, k \}\) by the induction hypothesis, which implies \( \model, \valuation \models \AVAR{\cataSort}( \combine_i(\seq{y_1}, \dots, \seq{y_k}, \seq{x})) \).
  From this and \( \cata(\dterm) = \combine_i(\cata(\dterm_1), \ldots, \cata(\dterm_k), \sem{\arith_1}, \ldots, \sem{\arith_l})\), we must have  \( \sem{\AVAR{\cataSort}}_{\model}(\cata(\dterm)) = \top \).
  \qed
\end{proof}

Finally, we show that \( \cmodel \) is indeed a model.
\begin{theorem}
 The structure \( \cmodel \) is a model of \( \system = \{ \clause_1, \ldots, \clause_n \} \).
\end{theorem}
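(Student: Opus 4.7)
The plan is to verify $\cmodel \models \clause$ for each clause $\clause \in \system$ individually. Fix one such clause
\[
\clause = \forall \seq{x}^\stypeint, \seq{y}^\dsort.\ H \Leftarrow \constraint \land P_1(\seq{t}_1) \land \cdots \land P_k(\seq{t}_k),
\]
together with the variable abstraction environment $\CENV$ used when forming $\cataA(\clause)$, so that $\CENV(y_i) = (y_i^1, \ldots, y_i^\NAPPROX)$ for each $y_i \in \seq{y}^\dsort$. I would pick an arbitrary valuation $\valuation$ on $\cmodel$ and assume its body is satisfied, then lift $\valuation$ to an integer valuation $\valuationTwo$ by setting $\valuationTwo(x) = \valuation(x)$ on integer variables and $\valuationTwo(\CENV(y_i)) = \cata(\valuation(y_i))$ on each ADT variable, so that $\valuation \valRel \valuationTwo$ by construction.

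The core of the argument then chains the preceding lemmas to transport satisfaction of the body from $\cmodel$ to $\model$: Lemma~\ref{lem:soundness-constraint} gives $\model,\valuationTwo \models \cataAFB(\constraint)$, and the left-to-right direction of Lemma~\ref{lem:soundness-pred}(1) converts each hypothesis $\cmodel,\valuation \models P_i(\seq{t}_i)$ into $\model,\valuationTwo \models \cataAFB(P_i(\seq{t}_i))$. Together with the admissibility atoms handled below, this shows that $\model,\valuationTwo$ satisfies the full body of $\cataA(\clause)$, so from $\model \models \cataA(\clause)$ we obtain $\model,\valuationTwo \models \cataAFB(H)$. If $H = \HFalse$ this is absurd and the assumed body satisfaction is vacuous; otherwise $H = P(\seq{t})$, and the right-to-left direction of Lemma~\ref{lem:soundness-pred}(1) yields $\cmodel,\valuation \models H$, as required.

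The step I expect to be the main obstacle is justifying the admissibility atoms $\absPred_\dsort(\CENV(y_i))$ appearing in the body of $\cataA(\clause)$ because of the augmentation with $\CODPo$, since Lemma~\ref{lem:admissibility-appx} is stated only for the \emph{minimum} model. What is actually needed, however, is merely the containment $\cata(\herbrand_\dsort) \subseteq \sem{\absPred_\dsort}_\model$, and this follows from the inductive argument embedded in the proof of that lemma: it shows that \emph{any} model of the abstracted $\CODPo$-clauses must assign $\top$ to $\absPred_\dsort(\cata(\dterm))$ for every $\dterm \in \herbrand_\dsort$. Applying this with $\dterm = \valuation(y_i)$ and unfolding $\valuationTwo(\CENV(y_i)) = \cata(\valuation(y_i))$ gives $\model,\valuationTwo \models \absPred_\dsort(\CENV(y_i))$. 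With this observation in hand, the remaining reasoning is a routine clause-by-clause assembly of the preceding lemmas.
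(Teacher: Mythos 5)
Your proposal is correct and follows essentially the same route as the paper's proof: fix a clause, lift a valuation $\valuation$ on $\cmodel$ to an integer valuation $\valuationTwo$ with $\valuation \valRel \valuationTwo$, transport the body via Lemmas~\ref{lem:soundness-pred} and~\ref{lem:soundness-constraint}, discharge the admissibility atoms, and pull the head back with the converse direction of Lemma~\ref{lem:soundness-pred}(1). Your treatment of the admissibility atoms is in fact slightly more careful than the paper's, which simply cites Lemma~\ref{lem:admissibility-appx} (stated for the minimum model) where, as you observe, what is really needed--and what the second half of that lemma's proof establishes--is that \emph{every} model of the abstracted $\CODPo$-clauses contains the image of $\cata$.
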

\begin{proof}
  It suffices to prove that \( \cmodel \models \clause_i \) for every \( i \).
  Suppose that \( \clause_i \) is of the form
  \begin{align*}
   \forall x_1^\dsort \ldots x_p^\dsort  y^\stypeint_1 \ldots y_q^\stypeint . \PREDSYM_{i_0}(\seq {\dterm_{i_0}}) \Longleftarrow \constraint \land \PREDSYM_{i_1}(\seq {\dterm_{i_1}}) \land \cdots \land \PREDSYM_{i_r}(\seq {\dterm_{i_r}}).
  \end{align*}
  Here we have assumed that the head of the clause is a predicate application, but the case where the head is \( \textit{false} \) can be proved in similar manner.
  The clause \( \cataA(\clause_i) \) is then of the form\footnote{In this proof universal quantifiers are explicitly written.}
  \begin{align*}
    \forall \seq{x_1} \ldots \seq{x_p}, y_1 \ldots y_q. \absPred_{i_0}(\seq {\cataAT(\dterm_{i_0})}) \Longleftarrow&\
    \begin{aligned}[t]
     &\cataAFB(\constraint) \land \varphi_{\mathrm{adm}} \\
     &\land \absPred_{i_1}(\seq {\cataAT(\dterm_{i_1})}) \land \cdots \land \absPred_{i_r}(\seq {\cataAT(\dterm_{i_r})})
    \end{aligned}\\
    &\text{where } \varphi_{\mathrm{adm}} \defeq \absPred_\dsort(\seq{x_1}) \land \cdots \land \absPred_\dsort(\seq{x_p}).
  \end{align*}
  Here \( \seq{x_i} = x_{i1}, \ldots, x_{i\NAPPROX} \) and \( \CENV = [x_1 \mapsto \seq{x_1}, \ldots, x_p \mapsto \seq{x_p}]\).

  Our goal is to show \( \cmodel, \valuation \models \PREDSYM_{i_0}(\seq {\dterm_{i_0}}) \Longleftarrow \constraint \land \PREDSYM_{i_1}(\seq {\dterm_{i_1}}) \land \cdots \land \PREDSYM_{i_r}(\seq {\dterm_{i_r}}) \) for all valuation \( \valuation \).
  Let \(  \dterm_1, \ldots, \dterm_p \in \herbrand_\dsort \), \( n_1, \ldots, n_q \) and
  \begin{align*}
    \valuation \defeq [x_1 \mapsto \term_1 , \ldots, x_p \mapsto \dterm_p, y_1 \mapsto n_1, \ldots, y_q \mapsto n_q].
  \end{align*}
  Let \( \valuationTwo \) be a valuation defined by
  \begin{align*}
    \valuationTwo \defeq [\CENV(x_1) \mapsto \cata(\dterm_1), \ldots, \CENV(x_p), \mapsto \cata(\dterm_p), y_1 \mapsto n_1, \ldots, y_q \mapsto n_q]
  \end{align*}
  Clearly, we have \( \valuation \valRel \valuationTwo\).
  By the assumption that \( \model \) is a model of \( \cataA(\system)\), we have \( \model \models \cataA(\clause_i) \).
  In particular, we have
  \begin{align*}
    \model, \valuationTwo \models \absPred_{i_0}(\seq {\cataAT(\dterm_{i_0})}) \Longleftarrow&\  \cataAFB(\constraint) \land \absPred_{i_1}(\seq {\cataAT(\dterm_{i_1})}) \land \cdots \land \absPred_{i_r}(\seq {\cataAT(\dterm_{i_r})}) \land \varphi_{\mathrm{adm}}
  \end{align*}
  Thanks to Lemma~\ref{lem:admissibility-appx}, we also know that \( \model, \valuationTwo \models \varphi_{\mathrm{adm}} \).
  From this, it follows that
  \begin{align*}
    \model, \valuationTwo \models \absPred_{i_0}(\seq {\cataAT(\dterm_{i_0})}) \Longleftarrow&\  \cataAFB(\constraint) \land \absPred_{i_1}(\seq {\cataAT(\dterm_{i_1})}) \land \cdots \land \absPred_{i_r}(\seq {\cataAT(\dterm_{i_r})})
  \end{align*}
  Applying Lemma~\ref{lem:soundness-pred} and~\ref{lem:soundness-constraint} to the above satisfaction relation, we get
  \begin{align*}
    \cmodel, \valuation \models \PREDSYM_{i_0}(\seq {\dterm_{i_0}}) \Longleftarrow \constraint \land \PREDSYM_{i_1}(\seq {\dterm_{i_1}}) \land \cdots \land \PREDSYM_{i_r}(\seq {\dterm_{i_r}})
  \end{align*}
  as desired.
  \qed
\end{proof}

\section{Additional Material for Evaluation}
\label{ap:additional-eval}

\begin{table}[t]
    \centering
    \caption{Tools used in the evaluation}
    \begin{tabular}{l|l|l}
        Tool & Version  \\
        \hline
        Z3 & 4.12.2 \\
        \spacer{} & Z3 version 4.12.2 \\
        \eldarica{} & v2.1 \\
        \hoice{} & 1.10.0 \\
        Vampire & 4.6.1 (Z3: 4.8.13.0) \\
    \end{tabular}
    \label{tab:tools}
\end{table}

We listed the versions of the tools used in our evaluation in Table~\ref{tab:tools}.
We used the following option for z3 to use Spacer:
\begin{itemize}
  \item \texttt{fp.xform.tail\_simplifier\_pve=false}
  \item \texttt{fp.validate=true}
  \item \texttt{fp.spacer.mbqi=false}
  \item \texttt{fp.spacer.use\_iuc=true}
  \item \texttt{fp.spacer.global=true}
  \item \texttt{fp.spacer.expand\_bnd=true}
  \item \texttt{fp.spacer.q3.use\_qgen=true}
  \item \texttt{fp.spacer.q3.instantiate=true}
  \item \texttt{fp.spacer.q3=true}
  \item \texttt{fp.spacer.ground\_pobs=false}
\end{itemize}
which can be found in the competition report for CHC COMP 2021~\cite{chc-comp-2021}.
We enabled \texttt{-portfolio} option for \eldarica{}, \texttt{-t --no-isolation --solver vampire} option for RInGen.
We utilized the default options for the other tools.

\fi

\end{document}